\newtheorem{thm}{Theorem}
\newtheorem{lemma}[thm]{Lemma}
\newtheorem{obs}[thm]{Observation}
\newtheorem{claim}{Claim}
\theoremstyle{remark}
\theoremstyle{definition}
\newtheorem{algorithm}[thm]{Algorithm}
\newcommand{\grp}[1]{\left(#1\right)}
\newcommand{\set}[1]{\left\{#1\right\}}
\newcommand{\ignore}[1]{}
\newcommand{\dplus}{\ensuremath{\Delta^+}} 
\newcommand{\dcrk}{\ensuremath{(\mbox{DCR}(k)})}
\newcommand{\0}{\textbf{0}}
\newcommand{\1}{\textbf{1}}
\newcommand{\bcr}{\eqref{bcr}}
\newcommand{\dcr}{\eqref{dcr}}
\newcommand{\bcrf}{\eqref{bcrf}}
\newcommand{\dcrf}{\eqref{dcrf}}
\newcommand{\BCR}[1]{(BCR$_{#1}$)}
\newcommand{\DCR}[1]{(DCR$_{#1}$)}
\newcommand{\ddcr}[1]{(DCR$_{#1}^D)$}
\def\C{\ensuremath{\mathcal{C}}}
\def\I{\ensuremath{\mathcal{I}}}
\def\K{\ensuremath{\mathcal{K}}}
\def\L{\ensuremath{\mathcal{L}}}
\def\X{\ensuremath{\mathcal{X}}}
\def\Y{\ensuremath{\mathcal{Y}}}
\def\R{\mathbb{R}}
\def \naturals {\mathbb{N}}
\def\sink{\ensuremath{\operatorname{sink}}}
\def\sources{\operatorname{sources}}
\def\max{\operatorname{max}}
\def\st{\textup{s.t.}}
\def\supp{\operatorname{supp}}
\def\comp{\mathcal{K}}
\title{Efficient Algorithms for Solving Hypergraphic Steiner Tree
  Relaxations in Quasi-Bipartite Instances}
\author{
Isaac Fung \and Konstantinos Georgiou \and Jochen K\"onemann \and Malcolm Sharpe }
\begin{document}
\maketitle

\abstract{We consider the Steiner tree problem in quasi-bipartite
  graphs, where no two Steiner vertices are connected by an edge. For
  this class of instances, we present an efficient algorithm to
  exactly solve the so called directed component relaxation (DCR), a
  specific form of hypergraphic LP relaxation that was instrumental in
  the recent break-through result by Byrka et al.~\cite{BGRS10}. Our
  algorithm hinges on an efficiently computable map from extreme
  points of the bidirected cut relaxation to feasible solutions of
  (DCR). As a consequence, together with \cite{BGRS10} we immediately
  obtain an efficient 73/60-approximation for quasi-bipartite Steiner
  tree instances. We also present a particularly simple (BCR)-based
  random sampling algorithm that achieves a performance guarantee
  slightly better than 77/60. }

\section{Introduction}

In the Steiner tree problem, we are given an undirected graph
$G=(V,E)$ with costs $c$ on edges and its vertex set partitioned into
terminals (denoted $R \subseteq V$) and Steiner vertices ($V\setminus
R$). A \emph{Steiner tree} is a tree spanning all of $R$ plus any
subset of $V \subseteq R$, and the problem is to find a minimum-cost such
tree.  The Steiner tree problem is {$\mathsf{APX}$}-hard, thus the
best we can hope for is a constant-factor approximation algorithm. In particular, the best inapproximability known (assuming $\mathsf{P} \not = \mathsf{NP}$) is 1.01063 ($>\frac{96}{95}$) due to Chleb{\'i}k and Chleb{\'i}kov{\'a}~\cite{CC02}. For the special family of instances that are known as quasi-bipartite graphs, which is the subject of our work, the best hardness known is 1.00791 ($>\frac{128}{127}$)~\cite{CC02}, under the same complexity assumption. 

In a recent break-through paper, Byrka, Grandoni, Rothvo{\ss} and
Sanit\`{a} \cite{BGRS10,BGRS11} presented the currently best approximation algorithm known for the problem. The algorithm has a
performance ratio of $\ln(4)+\epsilon$ for any fixed $\epsilon > 0$,
and it iteratively rounds solutions to a so called {\em hypergraphic}
linear program. Such LPs commonly have a variable $x_K$ for each 
$K \subset R$, representing a {\em full component} spanning the
terminals of $K$. A full component is a tree whose leaves are
terminals and whose internal vertices are non-terminals.

\begin{figure}\label{fig:fullcomponents}
\begin{center}
\includegraphics[scale=0.7]{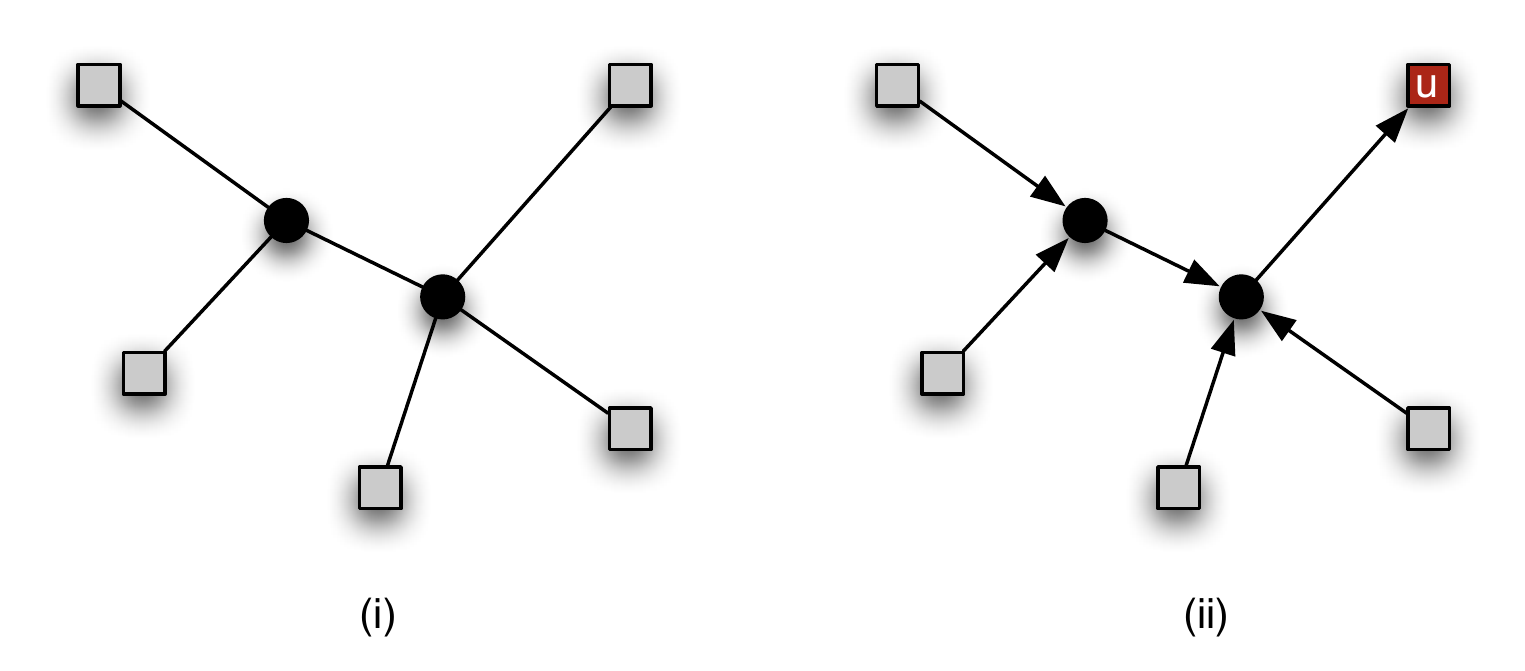}
  \caption{Example of a full components and a directed full component.}
\end{center}
\end{figure}

Figure~\ref{fig:fullcomponents}.(i) shows an example of a full component spanning a set of
terminals (squares). There are several equivalent hypergraphic LPs
\cite{CKP10}; here we focus on the {\em directed component relaxation}
(DCR) that was first introduced by Polzin and Vahdati
Daneshmand~\cite{PVd03}, and then later used by Byrka et
al.~\cite{BGRS10,BGRS11}. We will now describe this LP. 

Given a full component $K$ and one of its terminals $u$, we obtain a
{\em directed} full component by orienting all of $K$'s edges towards
the {\em sink} node $u$. Vertices $v \in K\setminus u$ are called {\em
  sources}; an illustration is given in Figure~\ref{fig:fullcomponents}.(ii). Note that when there are no Steiner-Steiner edges, i.e. when the instance is quasi-bipartite, then every full component is associated with only one Steiner vertex which we call the \textit{centre} of the full component. 

In the following, we let \K\ denote the set of all directed full
components, and for $K \in \K$, we let \sink(K)\ be the sink node of
$K$. We will sometimes abuse notation, and use $K \in \K$ for the set
of arcs of the corresponding oriented full component, and for the set
of terminals it spans interchangeably.  We use $c_K$ for the cost of the
full component $K$. For a set $U \subseteq R$, we let $\dplus(U)$
denote the set of components $K \in \K$ whose sink lies outside $U$
and that have at least one source in $U$. In this case, we will also say that $K$ \textit{crosses} $U$. We also use $x(S)$ as a
short for $\sum_{j \in S} x_j$. (DCR) has a variable for every
$K \in \K$, and a constraint for every set $U \subseteq R\setminus
r$, where $r \in R$ is an arbitrarily chosen root node. In the
following we say that $U \subseteq V$ is \textit{valid} if it contains at
least one terminal, but not the root. 

\medskip \noindent \hspace*{-1cm}
\begin{minipage}{.44\textwidth} 
  \begin{align}
    \min \quad & \sum_{K \in \K} c_K x_{K} \tag{DCR} \label{dcr} \\
    \st \quad & x(\dplus(U)) \geq 1 \quad \forall
    U \subseteq R  \notag\\
    & x \geq 0 \notag
  \end{align}
\end{minipage} 
\hspace*{.02\textwidth}
 \vline
\hspace*{.02\textwidth}
\begin{minipage}{.44\textwidth}
  \begin{align}
    \min \quad & \sum_{a \in A} c_a x_a \tag{BCR} \label{bcr} \\
    \st \quad & x(\delta^+(U)) \geq 1 \quad \forall
    \mbox{ valid } U \subseteq V  \notag\\
    & x \geq 0 \notag
  \end{align}
\end{minipage}

\bigskip

Goemans et al.~\cite{GO+11} recently showed that solving \dcr\ is
strongly NP-hard. Nevertheless, for any fixed $\epsilon$ there exist an efficient  
$(1+\epsilon)$-approximation for the value of \dcr\ in the following sense: 
Let \dcrk\ be the version of \eqref{dcr} that omits
variables for full components with more than $k$ terminals. Borchers
and Du~\cite{BD97} showed that the optimum value of \dcrk\ is larger
than that of \eqref{dcr} by at most a factor $\rho_k$ where
$$ \rho_k = \frac{(t+1)2^t + s}{t2^t+s}, $$
where we let $t\in \naturals$ and $s < 2^t$ such that $k=2^t+s$.
Byrka et al.~\cite{BGRS10,BGRS11} compute such an approximate solution
of \dcr, and the performance guarantee of their algorithm is
$\rho_k\cdot \ln 4$; for every $\epsilon$, the value $k$ can be chosen large enough such that this is at most $\ln(4)+\epsilon$. One easily sees
that already for moderately small values of $\epsilon$, large values
of $k$ need to be chosen. E.g., for $\rho_k\cdot \ln 4$ to be smaller
than $1.6$, we need $k$ to be bigger than $90$ (compare this to $\ln 4 \leq 1.39$). For such values of
$k$, solving \dcrk\ becomes a challenge since even compact
reformulations of \dcrk~\cite{BGRS10,BGRS11} have $O(n^k)$ variables,
and equally many constraints.

In this paper, we study the {\em bidirected cut relaxation} \eqref{bcr}
~\cite{E67}. In this relaxation, we convert first the original instance $G=(V,E)$ into the digraph $D=(V,A)$, where $A$ has arcs $(u,v)$ and $(v,u)$ for every edge $uv \in E$; both arcs have the same cost as $uv$. We once again
pick an arbitrary root terminal $r \in R$, and call a set $U \subseteq
V$ {\em valid} if it contains terminals but not the root. \eqref{bcr}
has a variable for every arc in $A$, and a constraint for every valid set.

Despite the fact that this relaxation is widely considered to be
strong, its integrality gap is only known to be at least
$36/31$~\cite{BGRS11}, and at most $2$. The known lower and
upper bounds on the integrality gap of \eqref{dcr}, on the other hand,
are $8/7$~\cite{KPT11} and $\ln(4)$~\cite{GO+11}.

In this note we focus on the class of {\em quasi-bipartite} Steiner tree
instances -- instances, where no two Steiner nodes are connected by an
edge. Our main result for such instances with $n$ many vertices and $m$ many edges is the following.

\begin{thm}\label{thm:main}
  For quasi-bipartite Steiner tree instances, \eqref{dcr} can be
  solved exactly using $O(mn^3)$ minimum $s,t$-cut computations in 
  graphs with $O(mn)$ vertices.
\end{thm}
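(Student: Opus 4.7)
The plan is to reduce solving \eqref{dcr} exactly to solving \eqref{bcr} via a cost-preserving lifting $\phi$ from extreme points of \eqref{bcr} to feasible solutions of \eqref{dcr}. Letting $z_{B}$ and $z_{D}$ denote the optimal values of \eqref{bcr} and \eqref{dcr} respectively, it is known for hypergraphic LPs that $z_{B}\le z_{D}$, so any feasible \eqref{dcr} solution of cost $z_{B}$ is necessarily \eqref{dcr}-optimal. To obtain an extreme point optimum $x^{*}$ of \eqref{bcr} we apply the ellipsoid method with the standard min-cut separation oracle: for each terminal $t\neq r$, compute the minimum $r$-$t$ cut in $D=(V,A)$ under capacities $x$ and check whether it is at least $1$. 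This uses $O(n)$ minimum $s,t$-cut computations per separation call on an $n$-vertex digraph, and polynomially many calls suffice to produce $x^{*}$.

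The heart of the argument is the construction of $\phi(x^{*})$. Since the instance is quasi-bipartite, every directed full component $K\in\K$ has a unique Steiner vertex as its \centre, and all of $K$'s arcs either enter the centre from a terminal source or leave it to the terminal \sink(K). Thus $\phi$ naturally acts vertex-by-vertex: for each Steiner vertex $v$, I would decompose the flow of $x^{*}$ on the arcs incident to $v$ into a non-negative combination of ``star'' components centred at $v$, assigning each star a sink and a source set that together exactly account for the per-arc flow values. This yields weights $y_{K}$ with
\[
\sum_{K\in\K} c_{K}\,y_{K} \;=\; \sum_{a\in A} c_{a}\,x^{*}_{a} \;=\; c^{\top}x^{*},
\]
so cost preservation is immediate. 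The local decompositions are most naturally computed in auxiliary flow networks built around each Steiner vertex and its neighborhood. Because a Steiner vertex can be adjacent to $O(n)$ terminals, these networks can have up to $O(mn)$ vertices, and carrying out the decomposition across all Steiner vertices (together with the per-vertex certificates required below) is where the $O(mn^{3})$ min-cut computations claimed in the theorem are spent.

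The main obstacle will be verifying feasibility of the resulting $y$, namely $y(\dplus(U))\ge 1$ for every valid $U\subseteq R\setminus r$. The only information we have from $x^{*}$ is the family of bidirected cut constraints $x^{*}(\delta^{+}(U\cup S))\ge 1$ for every choice of Steiner subset $S\subseteq V\setminus R$; for each terminal cut $U$ one must argue that the star components produced by the local decompositions can be matched up so that at least one unit of total weight has its \sink\ outside $U$ and at least one source inside. This ``uncrossing'' step is where the combinatorial structure of \eqref{bcr} extreme points (sparse support, tight laminar families) must interact with the vertex-local decompositions, and where I anticipate both the main conceptual difficulty and the additional min-cut computations that certify the correct sinks and source sets at each Steiner centre.
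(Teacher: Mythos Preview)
Your high-level architecture is right --- solve \eqref{bcr}, lift the optimum to a feasible \eqref{dcr} solution of equal cost --- and cost preservation is indeed trivial once the lift exists. But the entire technical content of the theorem is the construction of that lift, and here your proposal has a genuine gap: you explicitly flag feasibility of $y$ as ``the main obstacle'' and then leave it unresolved. A vertex-by-vertex decomposition as you describe it is underspecified; for a fixed Steiner vertex $v$ there are many ways to partition the incident arc mass into stars (e.g.\ two sources of weight $1/2$ and one sink of weight $1$ can be split as two $2$-terminal stars or as one $3$-terminal star plus a bare sink), and which choice is made determines whether $y(\Delta^+(U))\ge 1$ holds for tight sets $U$. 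Nothing in your outline explains how to make these choices, nor why a globally consistent choice exists.

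The paper does \emph{not} do a one-shot local decomposition. It works iteratively in a combined polyhedron $\mathcal{I}=\{(x,y):x(\delta^+(U))+y(\Delta^+(U))\ge 1\}$, starting at $(x^*,\mathbf{0})$ and repeatedly shifting weight $\lambda$ from the arcs of a single ``feasible'' component $K$ onto $y_K$. The existence of a feasible $K$ at each step is nontrivial and is proved via LP duality (showing extreme points of \eqref{bcrf} are images under $\Phi$ of \eqref{dcrf} extreme points, for intersecting-supermodular $f$). Finding $K$ efficiently requires a characterization in terms of three families of tight sets $\mathcal{C},\mathcal{X}^*,\mathcal{Y}^*$ (eligible sources, sets that may contain at most one source, sets that must contain a source), each computable with $O(n)$ min-cuts; the maximal step $\lambda$ takes another $O(n^2)$ min-cuts. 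The $O(mn)$ bound on iterations comes from a saturating/non-saturating analysis of how $\mathcal{C},\mathcal{X}^*,\mathcal{Y}^*$ evolve, and the $O(mn)$ vertex count of the auxiliary graphs is because one node is added per component produced --- not, as you suggest, because a single Steiner vertex has many neighbours. None of these ingredients appears in your proposal, and the stated running-time bounds cannot be recovered from your sketch.
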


We accomplish this by solving \eqref{bcr}, and by giving an efficient
\textit{decomposition algorithm} that maps the given minimal \eqref{bcr} solution
to one of \eqref{dcr}. We note that Chakrabarty et al.~\cite{CKP10}
had previously shown that \eqref{bcr} and \eqref{dcr} have the same
optimal values in quasi-bipartite graphs. The proof in \cite{CKP10}
uses ``dual'' arguments, however, and it is not clear how to obtain a
``primal'' algorithm. 

The above theorem has a couple of consequences. First, we can use it
together with \cite{BGRS10,BGRS11} to obtain an efficient
$73/60$-approximation for quasi-bipartite Steiner tree instances. We also
obtain a slightly weaker $1.28$-approximation that uses a particularly
simple sampling strategy based on \eqref{bcr}.

We remark that Goemans et al.~\cite{GO+11} have recently obtained
an alternative proof of Theorem \ref{thm:main}. The work presented
here was obtained before \cite{GO+11} appeared on the arXiv, and is
therefore independent. 

\section{Decomposing \eqref{bcr} extreme points}\label{sec:decomp}

In this section we provide a proof of Theorem \ref{thm:main}. In the
following fix a quasi-bipartite instance of the Steiner tree problem.
Let $G=(V,E)$ be the input graph, $R\subseteq V$ the set of terminals,
and $c_e$ a non-negative cost for each of the edges $e \in E$.  Also
let $D=(V,A)$ be the digraph obtained from $G$ by replacing each edge
$e=uv$ by two arcs $(u,v)$, and $(v,u)$ each having cost $c_e$. We
choose a fixed root node $r \in R$, and call a set $U \subseteq V$
{\em valid} if it contains some terminals, but not the root.

Let $y$ be a feasible solution for \eqref{dcr}. We define the
following natural map from the space $\R^{\K}$ to $\R^A$:
$$ \Phi(y) = \sum_K \chi_K \cdot y_K, $$
where $\chi_K$ is the characteristic vector of the arcs of full
component $K$.  The proof of the following observation is straight
forward, and makes use of the fact that a full component crosses a
valid set $U$ only if at least one of its arcs does. 

\begin{obs}\label{obs:phi}
  If $y$ is feasible for \eqref{dcr} then $\Phi(y)$ is feasible for
  \eqref{bcr}. 
\end{obs}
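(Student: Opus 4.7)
The plan is to pick an arbitrary valid $U \subseteq V$ and show that $\Phi(y)(\delta^+(U)) \geq 1$. The natural "terminal projection" of $U$, namely $U' := U \cap R$, is itself valid in the (DCR) sense: it contains at least one terminal (since $U$ does), and it does not contain the root (since $U$ does not). Hence (DCR) feasibility of $y$ gives $y(\Delta^+(U')) \geq 1$.

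Next I would establish the key monotonicity-type inequality hinted at in the text: for every directed full component $K \in \Delta^+(U')$, the arc-characteristic vector satisfies $\chi_K(\delta^+(U)) \geq 1$. Unpacking the definitions, $K \in \Delta^+(U')$ means $\sink(K) \notin U'$ and $K$ has some source $s \in U'$. Since $\sink(K) \in R$ always, having $\sink(K) \notin U \cap R$ together with $\sink(K) \in R$ forces $\sink(K) \notin U$. Now inside $K$ there is a directed path from $s$ (which lies in $U' \subseteq U$) to $\sink(K)$ (which lies outside $U$); walking along this path, some arc must cross from $U$ to $V \setminus U$, and that arc belongs to $\delta^+(U)$. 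Thus $\chi_K(\delta^+(U)) \geq 1$.

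Combining these two ingredients with $y \geq 0$ yields the chain
\[
\Phi(y)(\delta^+(U)) \;=\; \sum_{K \in \K} \chi_K(\delta^+(U))\, y_K \;\geq\; \sum_{K \in \Delta^+(U')} \chi_K(\delta^+(U))\, y_K \;\geq\; \sum_{K \in \Delta^+(U')} y_K \;\geq\; 1,
\]
which is exactly the (BCR) constraint at $U$. Since $\Phi(y) \geq 0$ holds trivially from $y \geq 0$, this establishes feasibility.

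There is no real obstacle here; the only subtle point is being careful that "valid" has slightly different meanings for (BCR) (subsets of $V$) and (DCR) (subsets of $R$), and checking that intersecting with $R$ correctly bridges the two. Note also that the argument makes no use of the quasi-bipartite assumption, so the observation holds in full generality.
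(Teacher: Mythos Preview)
Your proof is correct and follows essentially the same approach as the paper, which only sketches the argument by noting that ``a full component crosses a valid set $U$ only if at least one of its arcs does.'' You have simply made explicit the projection $U' = U \cap R$ and the directed-path argument that justifies this implication.
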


Notice that $\Phi$ is cost-preserving, and it therefore follows
immediately that the optimum solution value of \eqref{bcr} is at most
that of \eqref{dcr}. In order to prove Theorem \ref{thm:main} it
suffices to show that, in the case of quasi-bipartite graphs, the
optimum of \eqref{dcr} is at most the optimum of \eqref{bcr} as well.
We accomplish this by showing that, given 
a minimal solution $x$ of \eqref{bcr}, we can efficiently find a
minimal solution $y$ of \eqref{dcr} such that $\Phi(y)=x$. 
We start by giving an overview of the proof. We define the following
polyhedron:
\begin{equation}\label{eq:Idef}
  \I := \set{(x,y)\in\R_+^A\times\R_+^{\K}
            : x(\delta^+(U))
              + y(\Delta^+(U))
              \ge 1,
            \quad\forall \mbox{ valid } U\subseteq V}.
\end{equation}
Clearly, if $x$ is feasible for \eqref{bcr} then $(x,\0) \in \I$. Call
a full component $K \in \K$ {\em feasible} with respect to $(x,y) \in
\I$ if we can shift fractional $\lambda$-weight from the arcs of $K$ to the
full component $K$. Formally, $K$ is feasible if
\begin{equation}\label{eq:I}
  (x-\lambda \cdot \chi_K, y+\lambda \cdot e_K) \in \I, 
\end{equation}
for some $\lambda > 0$,
where $e_K$ is the standard orthonormal vector indexed by full components in $\K$.
Our first goal then is to show in Section~\ref{subsec:existential}
that a feasible component always exists. Then in
Section~\ref{sec:fc-eff1} we show how to efficiently compute such a
feasible component $K$, which allows us to find in
Section~\ref{sec:fc-eff2} the maximum $\lambda$ corresponding to $K$
such that \eqref{eq:I} holds. Our strategy then is
self-evident. Starting with the initial feasible vector
$(x^0,y^0)=(x,\0)$ to $\I$, we define a sequence of values
$\lambda^1,\lambda^2,\ldots$ as above, giving rise to a sequence of
feasible vectors $(x^i,y^i)=(x^{i-1}-\lambda^i \cdot \chi_K,
y^{y-i}+\lambda^i \cdot e_K)$ to $\I$, where $K$ is the full component
corresponding to the value $\lambda^i$. Finally, in Section~\ref{sec:
  putting-things-together} we argue that the sequence above converges
in polynomial many steps into a feasible vector $(\0,y)$ to
$\I$. Since the weight shifting at every step preserves the total
cost, our main theorem follows.

We now fill in the details, and begin with a few existential
results. Subsequently, we show how to obtain a strongly polynomial
decomposition algorithm.

\subsection{Existential results}\label{subsec:existential}

In the following it will be convenient to study slight generalizations
of \eqref{bcr} and \eqref{dcr}. Let $f$ be an {\em intersecting
  supermodular} function defined on subsets of terminals; i.e., we
have 
$$ f(A) + f(B) \leq f(A \cap B) + f(A \cup B), $$
for any $A, B \subseteq R$ with $A \cap B \neq \emptyset$. We then
obtain the LPs \eqref{bcrf} and \eqref{dcrf} by replacing $\1$ by
$f$ on the right-hand sides. 

\medskip\noindent\hspace*{-1cm}
\begin{minipage}{.44\textwidth} 
  \begin{align}
    \min \quad & \sum_{K \in \K} c_K x_{K} \tag{DCR$_{f}$} \label{dcrf} \\
    \st \quad & x(\dplus(U)) \geq f(U) \quad \forall
    U \subseteq  \notag
    R \notag \\
    & x \geq 0 \notag
  \end{align}
\end{minipage} 
\hspace*{.02\textwidth}
 \vline
\hspace*{.02\textwidth}
\begin{minipage}{.44\textwidth}
  \begin{align}
    \min \quad & \sum_{a \in A} c_a x_a \tag{BCR$_{f}$} \label{bcrf} \\
    \st \quad & x(\delta^+(U)) \geq f(U\cap R) \quad \forall
    \mbox{ valid } U \subseteq V  \notag\\
    & x \geq 0 \notag
  \end{align}
\end{minipage}
\bigskip

\noindent Chakrabarty et al.~\cite{CKP10} showed that the optimal
values of \bcr\ and \dcr\ coincide for quasi-bipartite instances. It
is an easy exercise to see that their proof extends to \bcrf\ and
\dcrf. We provide an alternate proof of this fact in the
appendix.

\begin{thm}\label{thm:ckpe}
The optimal values of \bcrf\ and \dcrf\ coincide for quasi-bipartite instances, non-negative (not necessarily symmetric) costs $c$, and intersecting supermodular function $f$. 
\end{thm}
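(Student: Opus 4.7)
The plan is to mirror the decomposition scheme of Section~\ref{sec:decomp} with $f$ in place of $\1$. For one direction, $\OPT(\bcrf)\le\OPT(\dcrf)$, I would adapt Observation~\ref{obs:phi}. Given any $y$ feasible for \dcrf\ and any valid $U\subseteq V$, the intersection $U\cap R$ is a valid subset of $R$, and every full component $K\in\Delta^+(U\cap R)$ has its sink outside $U\cap R$---hence outside $U$, since sinks are terminals---and at least one source inside $U$. Any directed path inside $K$ from that source to the sink contains at least one arc of $\delta^+(U)$, and so
\begin{equation*}
  \Phi(y)(\delta^+(U))\ge y(\Delta^+(U\cap R))\ge f(U\cap R).
\end{equation*}
Cost preservation of $\Phi$ then yields the stated inequality on LP values.

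For the reverse inequality I would reuse the weight-shifting strategy behind Theorem~\ref{thm:main}, initialized from any minimal feasible $x$ for \bcrf. Generalize the polyhedron $\I$ of \eqref{eq:Idef} to
\begin{equation*}
  \I_f := \set{(x,y)\in\R_+^A\times\R_+^{\K} : x(\delta^+(U))+y(\Delta^+(U\cap R))\ge f(U\cap R),\ \forall\mbox{ valid }U\subseteq V},
\end{equation*}
and call a full component $K$ \emph{feasible} at $(x,y)\in\I_f$ if $(x-\lambda\chi_K,y+\lambda e_K)\in\I_f$ for some $\lambda>0$. Since $(x,\0)\in\I_f$, it suffices to show that whenever $x\ne\0$ some feasible $K$ exists. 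Iteratively shifting maximum weight onto such a $K$ produces a pair $(\0,y)$ with $y$ feasible for \dcrf, and each shift preserves total cost (since $c_K=\sum_{a\in K}c_a$), so $\sum_K c_K y_K=\sum_a c_a x_a$. Finiteness of the iteration follows as in the algorithmic analysis of Section~\ref{sec:decomp}.

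The hardest part is the generalized existential step. Its proof in Section~\ref{subsec:existential} for $f=\1$ proceeds by uncrossing tight constraints of $\I$ and then exploiting the quasi-bipartite structure of $G$ to locate a Steiner-rooted full component of positive shift capacity. Uncrossing is legitimate provided the right-hand side function $g(U):=f(U\cap R)$ is intersecting supermodular on valid $U\subseteq V$, which follows directly from the hypothesis on $f$: whenever $U_1\cap U_2\cap R\ne\emptyset$,
\begin{equation*}
  g(U_1)+g(U_2)=f(U_1\cap R)+f(U_2\cap R)\le f(U_1\cap U_2\cap R)+f((U_1\cup U_2)\cap R)=g(U_1\cap U_2)+g(U_1\cup U_2).
\end{equation*}
With $g$ intersecting supermodular in place of the trivially supermodular $\1$, the laminar family of tight sets and the local argument that exhibits a feasible $K$ go through verbatim. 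The cost vector plays no role in identifying such a $K$, so possibly asymmetric non-negative costs $c$ cause no difficulty. This gives $\OPT(\dcrf)\le\OPT(\bcrf)$, completing the proof.
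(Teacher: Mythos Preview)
Your argument is circular. The existential step you appeal to in Section~\ref{subsec:existential} is Lemma~\ref{lem:ex-comp}, and its proof is \emph{not} an uncrossing argument: it writes the minimal $x$ as a convex combination of \bcrf\ extreme points $x^i$, invokes Lemma~\ref{lem:decomp-exist} to obtain $y^i$ with $\Phi(y^i)=x^i$, and then reads off a component $K$ with $y^j_K>0$. But Lemma~\ref{lem:decomp-exist} is proved precisely by choosing a cost $c$ making $x$ the unique \bcrf\ optimum and then quoting Theorem~\ref{thm:ckpe} to conclude $c^Ty=c^Tx$. Thus the existence of a feasible component in the weight-shifting scheme already presupposes the very equality of optima you are trying to establish. (Note also that Lemma~\ref{lem:ex-comp} is already stated for arbitrary intersecting supermodular $f$, so there is nothing to ``generalize from $f=\1$''; the point is that its proof rests on Theorem~\ref{thm:ckpe}.) The uncrossing Lemma~\ref{lem:uncross} appears only later, in Section~\ref{sec:fc-eff1}, and is used to make the \emph{search} for a feasible component efficient once existence is already guaranteed; it does not by itself establish existence.

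The paper's own proof avoids this circularity by working entirely on the dual side: it takes an optimal dual solution $z$ of \ddcr{f}, uncrosses it to a laminar support, and then shows by induction on $|\supp(z)|$ that $f^Tz\le c^Tx$ is a valid inequality for \bcrf. The quasi-bipartite hypothesis enters in bounding the reduced costs $c'$ via carefully chosen small components through single Steiner vertices. If you want a primal weight-shifting proof, you would need an independent argument---not relying on LP-value equality---that for every minimal $(x,y)\in\I_f$ with $x\neq\0$ some feasible component exists; your proposal does not supply one.
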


\noindent The following is now an easy corollary.

\begin{lemma}\label{lem:decomp-exist}
  If $G$ is quasi-bipartite, $f$ is intersecting supermodular, and $x$
  is an extreme point of \bcrf, then there exists an extreme point
  $y$ of \dcrf\ such that $\Phi(y) = x$.
\end{lemma}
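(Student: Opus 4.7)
The plan is to reduce the claim to Theorem~\ref{thm:ckpe} by a uniqueness argument. I would first exhibit a non-negative cost vector $c \in \R_{\geq 0}^A$ under which the given extreme point $x$ is the \emph{unique} optimum of \bcrf. Applying Theorem~\ref{thm:ckpe} with this $c$ (and the induced component cost $c_K := \sum_{a \in K} c_a \geq 0$) yields an extreme-point optimum $y$ of \dcrf\ of equal value. I would then argue that $\Phi(y)$ is itself optimal for \bcrf\ under $c$, and therefore must coincide with $x$.

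Two routine ingredients close the loop. Cost-preservation of $\Phi$ is a direct expansion: $c \cdot \Phi(y) = \sum_K y_K \sum_{a \in K} c_a = \sum_K c_K y_K$. Feasibility of $\Phi(y)$ for \bcrf\ extends Observation~\ref{obs:phi} to supermodular $f$: for any valid $U \subseteq V$ and any $K \in \dplus(U \cap R)$, the quasi-bipartite structure forces the unique Steiner vertex (the centre) of $K$ to lie on exactly one side of $U$, and in either case an arc of $K$ must cross from $U$ to $V\setminus U$. Consequently $\Phi(y)(\delta^+(U)) \geq y(\dplus(U \cap R)) \geq f(U \cap R)$.

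The main obstacle is producing the non-negative distinguishing cost, since Theorem~\ref{thm:ckpe} is stated only for $c \geq 0$. Since $x$ is an extreme point of \bcrf, some collection $\mathcal{F}$ of valid sets together with the zero set $Z := \{a : x_a = 0\}$ yields $|A|$ linearly independent tight constraint rows at $x$. I would take
$$ c \;:=\; \sum_{U \in \mathcal{F}} \chi_{\delta^+(U)} \;+\; \sum_{a \in Z} e_a. $$
Then $c \geq 0$, and $c \cdot x = \sum_{U \in \mathcal{F}} f(U \cap R)$ because all selected constraints are tight at $x$. For any other feasible $x'$, one has $c \cdot x' \geq \sum_{U \in \mathcal{F}} f(U \cap R)$ with equality forcing every selected tight constraint to remain tight at $x'$; by linear independence this pins $x' = x$. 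Uniqueness then delivers $\Phi(y) = x$, and since $y$ was chosen as an extreme point, the proof is complete.
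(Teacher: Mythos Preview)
Your proof is correct and follows essentially the same strategy as the paper: choose a non-negative cost $c$ for which $x$ is the unique \bcrf\ optimum, invoke Theorem~\ref{thm:ckpe} to obtain an optimal extreme point $y$ of \dcrf\ with the same value, and use cost-preservation plus feasibility of $\Phi(y)$ to force $\Phi(y)=x$. The only cosmetic difference is that the paper obtains $c\ge 0$ abstractly from upward-closedness of the \bcrf\ feasible region, whereas you construct $c$ explicitly from a basis of tight constraints; both arguments are standard LP theory and lead to the same conclusion.
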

\begin{proof}
  By the theory of linear programming, there is $c\in\R^A$ such that
  $x$ is the unique optimal solution of \bcrf. Since the feasible
  region of \bcrf\ is upward-closed, we have $c\ge 0$, for otherwise
  $x$ would not be optimal. We claim next that \dcrf\ is
  feasible. Indeed, since \bcrf\ is feasible, we know
  $f(R),f(\emptyset)\le0$, and hence \dcr\ is feasible. Any feasible
  solution to \dcr\ can now be scaled to obtain a feasible solution to
  \dcrf.

  Since \dcrf\ is feasible and $c\ge \0$, we may let $y$ be an optimal extreme
  point solution to \dcrf. By Theorem~\ref{thm:ckpe}, $c^Ty =
  c^Tx$. Let $\tilde{x}=\Phi(y)$, and observe that 
  $c^T\tilde{x}=c^Ty=c^Tx$ since $\Phi$ preserves cost. Observation
  \ref{obs:phi} applies also to \bcrf\ and \dcrf\ and shows that 
  $\tilde{x}$ is feasible for \bcrf. As $x$ is the unique
  optimal solution to \bcrf\ for costs $c$, we must have
  $x=\tilde{x}$, and this completes the proof.
\end{proof}

The rest of this section focuses on making the above existential proof
constructive. In the following we once more abuse notation, and use
$\delta^+(U)$ ($\Delta^+(U)$) as the incidence vector of arcs (full
components) that cross valid set $U$; $\delta^+_a(U)$, and
$\Delta^+_K(U)$ then denote the component of this vector corresponding
to arc $a$ and full component $K$, respectively. We obtain the
following plausible lemma.

\begin{lemma}\label{lem:Delta-submod}
  For every $K\in\K$, $\Delta_K^+$ is submodular i.e. if
  $U,W\subseteq R$, then
$$
  \Delta_K^+(U) + \Delta_K^+(W) \ge \Delta_K^+(U\cap W) + \Delta_K^+(U\cup W).
$$
\end{lemma}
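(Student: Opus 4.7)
The plan is to observe that $\Delta_K^+$ is $\{0,1\}$-valued, so the submodularity inequality, whose right-hand side is at most $2$, reduces to a short case analysis on how $K$ crosses $U\cap W$ and $U\cup W$. First I would fix notation: let $s := \sink(K)$ and $S := K \setminus \{s\}$ be its source set, so that by definition $\Delta_K^+(Z) = 1$ iff $s \notin Z$ and $S \cap Z \neq \emptyset$.

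If $\Delta_K^+(U\cap W) = \Delta_K^+(U\cup W) = 1$, the condition at the intersection forces $S \cap U \neq \emptyset$ and $S \cap W \neq \emptyset$ (any source in $U\cap W$ is also in each of $U$ and $W$), while the condition at the union forces $s \notin U$ and $s \notin W$. Hence $\Delta_K^+(U) = \Delta_K^+(W) = 1$ and the inequality holds with equality. The case $\Delta_K^+(U\cap W) = \Delta_K^+(U\cup W) = 0$ is trivial, since the right-hand side vanishes.

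The only remaining case is that exactly one of the two indicators on the right is $1$. Suppose $\Delta_K^+(U\cap W) = 1$ and $\Delta_K^+(U\cup W) = 0$. Then as above I get sources of $K$ in both $U$ and $W$, while the failure at $U\cup W$ must be due to $s \in U\cup W$ (since $S$ still meets $U\cup W$). Moreover $s \notin U \cap W$, so $s$ lies in exactly one of $U, W$; the other set has no $s$ and a source, so it is crossed by $K$. The symmetric sub-case, $\Delta_K^+(U\cup W) = 1$ and $\Delta_K^+(U\cap W) = 0$, follows the same template: here $s \notin U, W$ and a source lies in $U\cup W$, hence in $U$ or $W$, making at least one of $\Delta_K^+(U), \Delta_K^+(W)$ equal to $1$.

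There is no real obstacle beyond careful bookkeeping in the mixed case; the argument uses nothing about $K$ beyond the two defining conditions for crossing a set, and in particular no structural property of quasi-bipartiteness is needed at this step.
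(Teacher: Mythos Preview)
Your proof is correct and follows essentially the same case analysis as the paper: both arguments split on the value of the right-hand side and, in the mixed cases, use that a source in $U\cap W$ lies in both $U$ and $W$ while the sink being outside $U\cup W$ forces it outside each. The only cosmetic difference is that you spell out slightly more of the bookkeeping (e.g., why the failure at $U\cup W$ must come from the sink), whereas the paper invokes ``without loss of generality'' directly.
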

\begin{proof}
  We proceed by case analysis. If the right-hand side is zero, the
  inequality is trivial. \textit{Case 1:} Suppose $\Delta_K^+(U\cap
  W)=1$ and $\Delta_K^+(U\cup W)=0$. Then, without loss of generality,
  the sink of $K$ lies outside $U$. Since $K$ has a sink in $U\cap W$,
  in particular in $U$, this implies $\Delta_K^+(U)=1$.  \textit{Case
    2:} Suppose $\Delta_K^+(U\cap W)=0$ and $\Delta_K^+(U\cup
  W)=1$. Then, without loss of generality, $K$ has a source inside
  $U$. Since the sink of $K$ lies outside $U\cup W$, in particular
  $U$, this implies $\Delta_K^+(U)=1$. \textit{Case 3:} Finally,
  suppose $\Delta_K^+(U\cap W)=1$ and $\Delta_K^+(U\cup W)=1$. Then
  $K$ has a source inside $U\cap W$ and its sink lies outside both $U$
  and $W$, so $\Delta_K^+(U)=\Delta_K^+(W)=1$.
\end{proof}

The following lemma shows that a \bcr\ extreme point can indeed be
decomposed iteratively into full components. 

\begin{lemma}\label{lem:ex-comp}
  Let $G$ be quasi-bipartite, let $f$ be intersecting supermodular,
  let $x$ be a minimal feasible solution of \bcrf\, and let $vu\in A$
  be such that $v\notin R,u\in R$, and $x_{vu}>0$.  Then there exists
  $\lambda>0$ and $K\in\comp$ with $vu\in K$ such that
  \begin{equation}\label{eq:xprime}
    x' := x-\lambda \chi_K 
  \end{equation}
  is minimally feasible in \BCR{f'} where $f'$ is obtained from $f$ by
  reducing $f(U)$ by $\lambda$ for all valid $U$ that are crossed by
  $K$. Moreover, for any such $\lambda$, $f'$ is again intersecting
  supermodular. 
\end{lemma}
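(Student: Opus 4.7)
The plan is to derive the existence of $K$ and $\lambda$ from Lemma~\ref{lem:decomp-exist}, by obtaining a nonnegative decomposition $y$ of $x$ into directed full components and picking $K$ with $vu\in K$ and $y_K>0$. The main obstacle is then showing that the resulting $x':=x-\lambda\chi_K$ is \emph{minimally} feasible for \BCR{f'}; intersecting supermodularity of $f'$ will follow directly from Lemma~\ref{lem:Delta-submod}.

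I first produce $y\in\R_+^\K$ with $\Phi(y)=x$ and $y$ feasible for \dcrf. When $x$ is an extreme point of \bcrf\ this is Lemma~\ref{lem:decomp-exist}; in general, the recession cone of \bcrf\ equals $\R_+^A$, and if $x$ had a positive coefficient along some $e_a$ in a Minkowski--Weyl decomposition, then $x-\epsilon e_a$ would remain feasible for small $\epsilon>0$, contradicting minimality. Hence $x=\sum_i \alpha_i x^i$ is a convex combination of extreme points of \bcrf, and taking the corresponding convex combination of the $y^i$ produced by Lemma~\ref{lem:decomp-exist} yields $y$. Since $\sum_{K\ni vu}y_K=x_{vu}>0$, some $K$ with $vu\in K$ satisfies $y_K>0$; set $\lambda:=y_K>0$. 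Then $x_a=\sum_{K'\ni a}y_{K'}\ge y_K=\lambda$ for every $a\in K$, so $x'\ge\0$, and with $y':=y-\lambda e_K\ge\0$ we have $\Phi(y')=x'$ together with $y'(\Delta^+(U))\ge f(U)-\lambda\Delta^+_K(U)=f'(U)$ for each $U\subseteq R$; the $f$-version of Observation~\ref{obs:phi} applied to $y'$ then gives that $x'$ is feasible for \BCR{f'}.

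The harder step is minimal feasibility of $x'$. I will argue that every valid $U$ that is tight for $x$ under $f$ is also tight for $x'$ under $f'$. Any full component that crosses $U\cap R$ contains a source-to-sink path crossing the cut, so $|\delta^+(U)\cap K|\ge\Delta^+_K(U\cap R)$ always holds; conversely, feasibility of $x'$ combined with tightness of $U$ for $x$ yields $\lambda\,|\delta^+(U)\cap K|\le\lambda\,\Delta^+_K(U\cap R)$, and equality between these quantities is precisely tightness of $U$ for $x'$ under $f'$. Consequently, every $a$ with $x'_a>0$ also has $x_a>0$, hence lies on some $x$-tight valid $U$ by minimality of $x$, and that same $U$ is $x'$-tight under $f'$, establishing minimality of $x'$.

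Finally, for $A,B\subseteq R$ with $A\cap B\ne\emptyset$, adding the intersecting supermodular inequality for $f$ to $-\lambda$ times the submodular inequality for $\Delta^+_K$ from Lemma~\ref{lem:Delta-submod} yields
\begin{align*}
  f'(A)+f'(B) &= f(A)+f(B)-\lambda\bigl(\Delta^+_K(A)+\Delta^+_K(B)\bigr) \\
              &\le f(A\cap B)+f(A\cup B)-\lambda\bigl(\Delta^+_K(A\cap B)+\Delta^+_K(A\cup B)\bigr) \\
              &= f'(A\cap B)+f'(A\cup B),
\end{align*}
so $f'$ is again intersecting supermodular.
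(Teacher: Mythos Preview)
Your proof is correct and follows essentially the same approach as the paper: decompose $x$ via Lemma~\ref{lem:decomp-exist} (after writing $x$ as a convex combination of extreme points), pick a component $K\ni vu$ in the support of the resulting $y$, and set $\lambda$ to its weight. Your minimality argument is phrased as ``tight sets for $(x,f)$ remain tight for $(x',f')$'' rather than the paper's contrapositive, but both rest on the same inequality $\chi_K(\delta^+(U))\ge\Delta^+_K(U\cap R)$; your explicit justification that a minimal feasible point lies in the convex hull of extreme points (via the recession cone) is a nice detail the paper leaves implicit.
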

\begin{proof}
  As $x$ is a minimal feasible solution to \bcrf\ we can write it as
  $$ x := \sum_{i=1}^k \alpha_i x^i, $$ 
  where $x^1, \ldots, x^k$ are \bcrf\ extreme points, $\alpha \geq
  \0$, and $\1^T\alpha=1$.  Since $x_{vu}>0$, $x^j_{vu}>0$ for some
  $j$.  By Lemma~\ref{lem:decomp-exist}, for every $i$ there exit $y^i$ such that
  $\Phi(y^i) = x^i$ and $y^i$ is feasible to \dcrf.  Let
  $K\in\comp$ be any component with $vu\in K$ and $y^j_K>0$. 

  Clearly, $y=\sum_{i=1}^k \alpha_i y^i$ is a feasible point of
  \dcrf. Now obtain $y'$ by reducing the $K$th component of $y^j$
  in the above convex combination to $0$; i.e., let
  $$ y' = y - \alpha_j y^j_K e_K. $$
  Let $f' = f-\alpha_j y^j_K \Delta^+_K$; i.e., we obtain $f'$ from
  $f$ by reducing $f(U)$ by $\alpha_jy^j_K$ if
  $K$ crosses $U$.
  Note that $f'$ is intersecting supermodular as $f$ is intersecting
  supermodular, and $\Delta^+_K$ is submodular.

  Simply from the definition of $f'$ it now follows that $y'$ is
  feasible for \DCR{f'}, and Observation \ref{obs:phi} shows that
  $x'=\Phi(y')$ is feasible for \BCR{f'}. From the definitions of
  $\Phi$ and $y'$ it also follows that $x' = x - \alpha_j y^j_K
  \chi_K$, and we therefore choose $\lambda=\alpha_jy^j_K$ in
  \eqref{eq:xprime}.

  Finally, suppose for the sake of contradiction that $x'$ is not a
  minimal solution to \BCR{f'}; i.e., there is an arc $a$ and
  $\epsilon > 0$ such that $x''=x'-\epsilon e_a$ is feasible for
  \BCR{f'}. Then 
  \begin{eqnarray*}
    x(\delta^+(U)) - \lambda |\{ a \,:\, a \in K \cap \delta^+(U)\}| -
    \epsilon\delta^+_a(U) & = & \\ 
    x'(\delta^+(U)) - \epsilon \delta^+_a(U) & \geq & f'(U\cap R), 
  \end{eqnarray*}
  for all valid $U \subseteq V$. Thus, we have
  $$ x(\delta^+(U)) - \epsilon\delta^+_a(U) \geq f'(U\cap R) + \lambda
  |\{ a \,:\, a \in K \cap \delta^+(U)\}|, $$
  for all valid $U$. 
  Note that the right hand side of this inequality is at least $f(U
  \cap R)$ as $\Delta^+_K(U) \leq \chi_K(\delta^+(U))$ for all $U$. 
  Thus, $x$ is not a minimal \bcrf\ solution, and this is the desired
  contradiction. 
\end{proof}

\subsection{Towards efficiency I : Finding feasible components}
\label{sec:fc-eff1}

What conditions are sufficient for a full component $K$ to be feasible?
Well, certainly we need $x_a>0$ for all $a
\in K$. Beyond this, feasibility is characterized by {\em tight} valid
constraints. We say that valid set $U$ is tight for $(x,y)$ if the
corresponding constraint in \I\ is satisfied with equality.  It is
easy to see that $K$ is valid iff every tight set crossed by $K$ is
crossed by at most one of its arcs; i.e.,
\begin{equation}\label{eq:feas}
  \Delta^+_K(U) \geq \sum_{a \in K} \delta^+_a(U)
\end{equation}
holds for all tight sets $U$. 
In fact, it suffices to look at certain tight sets.

\begin{figure}
  \begin{center}
    \includegraphics[scale=0.7]{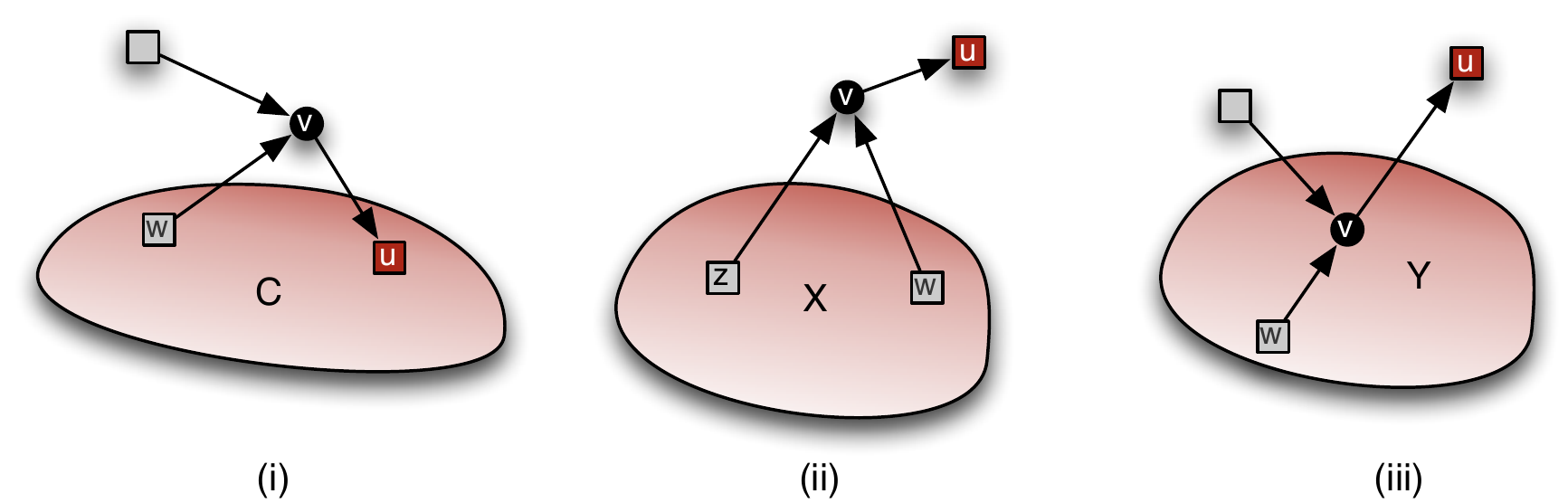}
  \end{center}
  \caption{\label{fig:feasible} Three classes of tight sets.}
\end{figure}

In what follows, fix a full component $K$ along with its centre $v$ and sink $u$.
Figure \ref{fig:feasible}.(i) shows a tight set $C$ that contains both
the sink $u$ and a source terminal $w$ but not the centre $v$. In
this case, the arc $(w,v)$ crosses $C$, but $K$ does not, and 
\eqref{eq:feas} is violated. We let \C\ be the set of neighbours of
$v$ that don't lie in such a tight set, and are hence eligible source
nodes:
$$ \C = \{ w \in \Gamma(v)\setminus u \,:\, 
     \nexists \mbox{ tight valid set } U \mbox{ with } u,w\in U
     \mbox{ and } v\notin U \}. $$
Figure \ref{fig:feasible}.(ii) shows a tight set $X$ that does not
contain the centre $v$ nor the sink $u$ of $K$, but two sources $w$
and $z$. Since two of $K$'s arcs cross $X$, component $K$ is once again not feasible. 
A feasible component may contain at most one
source from a tight set $X$ like this. We let \X\ be the set of all eligible source node sets
contained in such tight sets:
$$ \X = \{ X \cap \C \,:\, X \mbox{ is a tight valid set with } u,v
\not\in X\}. $$ 
Finally, Figure \ref{fig:feasible}.(iii) shows a set $Y$ that contains
$K$'s centre but not its sink. In this case, $K$ must contain one of
its sources in $Y$ as otherwise $K$ would not cross $Y$. We let \Y\ be
the set of all eligible source node sets contained in such tight sets:
$$ \Y = \{ Y \cap \C \,:\, Y \mbox{ is a tight valid set with } v \in
Y, u \not\in Y\}. $$

\begin{lemma}\label{lem:feas-detail}
  A component $K$ with centre $v$ and sink $u$ is feasible iff 
  (a) $x_a>0$ for all $a \in K$,
  (b) sources of $K$ lie in \C,
  (c) $K$ has at most one source in each set in \X, and
  (d) $K$ has at least one source in each set in \Y.
\end{lemma}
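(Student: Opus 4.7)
The plan is to reduce feasibility of $K$ to one combinatorial inequality per tight valid set, and then match those inequalities to (b)--(d) by case analysis on where the sink $u$ and centre $v$ lie. Unwinding the definition in \eqref{eq:I}, $K$ is feasible iff some $\lambda>0$ satisfies $x-\lambda\chi_K\ge 0$ and
$$\bigl(x(\delta^+(U))+y(\Delta^+(U))-1\bigr)\;\ge\;\lambda\bigl(\chi_K(\delta^+(U))-\Delta_K^+(U)\bigr)$$
for every valid $U$. The nonnegativity requirement gives (a). In the inequality, tight $U$ force $\chi_K(\delta^+(U))\le \Delta_K^+(U)$, while non-tight $U$ leave strictly positive slack. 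Since there are only finitely many valid sets and finitely many arcs in $K$, once every tight $U$ obeys this inequality we can take $\lambda$ to be the minimum of $\min_{a\in K} x_a$ and the finitely many positive slack ratios on the non-tight sets. Hence feasibility of $K$ is equivalent to (a) together with $\chi_K(\delta^+(U))\le \Delta_K^+(U)$ on every tight valid $U$.

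The remainder is a short case analysis, exploiting that in a quasi-bipartite instance $K$ is a star centred at $v$ with arcs $(v,u)$ and $(w,v)$ for each source $w$. \emph{Case $u\in U, v\notin U$:} $\Delta_K^+(U)=0$ and $\chi_K(\delta^+(U))$ counts sources of $K$ lying in $U$, so the inequality forbids any source of $K$ from sitting in $U$. Quantifying over sources, this is exactly the definition of $\C$, giving (b). \emph{Case $u,v\in U$:} both sides vanish and nothing is required. \emph{Case $u,v\notin U$:} $(v,u)$ does not cross, so $\chi_K(\delta^+(U))$ again counts sources of $K$ in $U$ while $\Delta_K^+(U)$ is $1$ exactly when this count is positive, so the inequality reduces to at most one source of $K$ in $U$; using (b) we have $U\cap\C\in\X$ and sources lie in $\C$, yielding (c). \emph{Case $u\notin U, v\in U$:} only $(v,u)$ crosses, so $\chi_K(\delta^+(U))=1$ and the inequality becomes at least one source of $K$ in $U$; with (b), $U\cap\C\in\Y$, yielding (d).

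Putting the two halves together, (a)--(d) match precisely the list of tight-set inequalities that feasibility demands in each nontrivial case, so the equivalence follows. The only step where any care is needed is the passage from pointwise tight-set inequalities to a single uniform $\lambda>0$, which rests only on finiteness of the set system and positivity of slacks on non-tight sets; everything else is a mechanical matching of cases to the three families $\C$, $\X$, $\Y$.
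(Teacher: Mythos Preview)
Your argument is correct and follows essentially the same approach as the paper: reduce feasibility to (a) together with the tight-set inequality $\chi_K(\delta^+(U))\le\Delta_K^+(U)$, then case-analyse. The only cosmetic difference is that you split cases by the location of $u$ and $v$ relative to $U$, whereas the paper splits by whether $K$ crosses $U$; the two partitions coincide, and your organisation is arguably cleaner since it maps directly onto the definitions of $\C$, $\X$, $\Y$.
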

\begin{proof}
  If $K$ is feasible then clearly (a)-(d) above needs to be
  satisfied. We prove the converse. 

  Suppose that (a)-(d) are satisfied for some full component $K$ with
  centre $v$ and sink $u$. Since $x_a>0$ for all $a \in K$ it suffices
  to check that \eqref{eq:feas} holds for all tight valid sets
  $U$. 

  Consider a particular tight valid set $U$, and suppose first that $K$
  crosses $U$; i.e., $K$ has its sink outside $U$, and at least one of
  its sources is in $U$. Then \eqref{eq:feas} is satisfied if
  $\delta^+(U)$ has at most one of $K$'s arcs. Suppose for the sake of
  contradiction that $\delta^+(U)$ has more than one arc from $K$. In
  this case, $v \not\in U$, and $U\cap \sources(K) \in \X$. But in this case, (b)
  implies that $K$ can have at most one source in $U$; a
  contradiction. 

  Now suppose that $K$ does not cross $U$, and assume for
  contradiction that $\delta^+(U)$ has some of $K$'s arcs. Assume
  first that $(v,u) \in \delta^+(U)$. In this case, $U\cap \sources(K) \in \Y$, and
  hence, by (d), $K$ must have a source in $U$, and therefore $K$
  crosses $U$; a contradiction. Now assume that some arc $(w,v) \in K$
  crosses $U$. In this case $w$ is a source of $K$, and $u$ must be in
  $U$ as otherwise $K$ would cross $U$. But this means that $w \not\in
  \C$, and we arrive yet again at a contradiction. 

  Thus, $K$ satisfies the condition in \eqref{eq:feas} for all tight
  sets $U$. 
\end{proof}

We need the following standard uncrossing lemma. 

\begin{lemma}\label{lem:uncross}
  Let $S,T\subseteq V$ be tight such that $S\cap T\cap
  R\not=\emptyset$. Then $S\cap T$ and $S\cup T$ are also tight valid
  sets.
\end{lemma}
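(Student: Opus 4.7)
The plan is the standard uncrossing argument: sum the tight constraints for $S$ and $T$, then use submodularity of both the directed cut function $\delta^+$ and the component-crossing function $\Delta^+$ to bound this sum from below by the sum of the constraints for $S\cap T$ and $S\cup T$. Since these latter constraints are always at least $1$ (assuming validity), both must be met with equality.

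First, I would verify validity. Since $S\cap T\cap R\neq\emptyset$, the set $S\cap T$ contains a terminal, and so does $S\cup T$; neither set contains the root, since neither $S$ nor $T$ does. So both $S\cap T$ and $S\cup T$ are valid and the constraints for them are well-defined.

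Next, the submodularity bounds. For any $x\in\R_+^A$, the standard directed-cut submodularity gives
\[
  x(\delta^+(S)) + x(\delta^+(T)) \;\ge\; x(\delta^+(S\cap T)) + x(\delta^+(S\cup T)).
\]
For the component term, Lemma~\ref{lem:Delta-submod} applied to the terminal sets $S\cap R$ and $T\cap R$ (whose intersection is nonempty by assumption), together with the fact that $\Delta_K^+$ depends only on $U\cap R$, yields $\Delta_K^+(S)+\Delta_K^+(T)\ge \Delta_K^+(S\cap T)+\Delta_K^+(S\cup T)$ for every $K\in\K$. Taking a non-negative combination with weights $y_K$ gives
\[
  y(\Delta^+(S)) + y(\Delta^+(T)) \;\ge\; y(\Delta^+(S\cap T)) + y(\Delta^+(S\cup T)).
\]

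Finally, I would add the two inequalities and combine with tightness of $S$ and $T$:
\[
  2 \;=\; \bigl[x(\delta^+(S))+y(\Delta^+(S))\bigr] + \bigl[x(\delta^+(T))+y(\Delta^+(T))\bigr]
      \;\ge\; \bigl[x(\delta^+(S\cap T))+y(\Delta^+(S\cap T))\bigr] + \bigl[x(\delta^+(S\cup T))+y(\Delta^+(S\cup T))\bigr] \;\ge\; 2,
\]
where the last inequality uses feasibility of $(x,y)\in\I$ on the two valid sets $S\cap T$ and $S\cup T$. The chain of inequalities forces equality throughout, so each of $S\cap T$ and $S\cup T$ attains its $\I$-constraint with equality, i.e., both are tight. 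There is no real obstacle here beyond checking that Lemma~\ref{lem:Delta-submod}, stated for subsets of $R$, applies to subsets of $V$ in the obvious way through the $U\mapsto U\cap R$ reduction; everything else is the textbook uncrossing calculation.
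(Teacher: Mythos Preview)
Your proof is correct and follows essentially the same uncrossing argument as the paper: both combine submodularity of $x(\delta^+(\cdot))$, submodularity of $y(\Delta^+(\cdot))$ from Lemma~\ref{lem:Delta-submod}, and feasibility of $(x,y)\in\I$ to squeeze the chain of inequalities into equalities. Your additional remarks on validity of $S\cap T$, $S\cup T$ and on the $U\mapsto U\cap R$ reduction for applying Lemma~\ref{lem:Delta-submod} are well taken and make the argument slightly more explicit than the paper's version.
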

\begin{proof}
  Since $S\cap T\cap R\not=\emptyset$, $S\cap T$ and $S\cup T$ are
  valid, and hence 
\[
\begin{array}{rcl}
  2-y(\Delta^+(S)) -y(\Delta^+(T))   & = & x(\delta^+(S)) + x(\delta^+(T)) \\
    & \ge & x(\delta^+(S\cap T)) + x(\delta^+(S\cup T)) \\
    & \ge & 2-y(\Delta^+(S\cap T) -y(\Delta^+(S\cup T)) \\
    & \ge & 2-y(\Delta^+(S)) -y(\Delta^+(T)), \\
\end{array}
\]
where the first inequality uses the submodularity of
$x(\delta^+(\cdot))$, the second inequality follows from feasibility
of the constraints in \I, and the last inequality uses
Lemma~\ref{lem:Delta-submod}.  It follows that all inequalities above
hold with equality.
\end{proof}

The last puzzle piece needed before we present an algorithm to find
feasible components is the following structural fact.

\begin{lemma}\label{lem:disjoint}
  \X\ and \Y\ are closed under intersection and union.
\end{lemma}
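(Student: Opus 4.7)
The plan is to derive both closure assertions from the uncrossing lemma (Lemma~\ref{lem:uncross}), combined with the observation that the set-theoretic conditions defining the parent sets of elements of $\X$ and $\Y$ are themselves preserved under $\cap$ and $\cup$. Concretely, if $u, v \notin X_1, X_2$ then $u, v$ are absent from both $X_1 \cap X_2$ and $X_1 \cup X_2$; similarly, if $v \in Y_i$ and $u \notin Y_i$ for $i = 1, 2$, then $v \in Y_1 \cap Y_2 \subseteq Y_1 \cup Y_2$ while $u$ avoids both.

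For the $\X$ part, I would take two elements $A_i = X_i \cap \C \in \X$ with the $X_i$ tight valid sets avoiding $u$ and $v$, and rewrite $A_1 \cap A_2 = (X_1 \cap X_2) \cap \C$ and $A_1 \cup A_2 = (X_1 \cup X_2) \cap \C$. Provided $A_1 \cap A_2 \neq \emptyset$ we automatically get $X_1 \cap X_2 \cap R \neq \emptyset$ (since $\C \subseteq R$), so Lemma~\ref{lem:uncross} delivers that $X_1 \cap X_2$ and $X_1 \cup X_2$ are again tight valid; the preservation observation places both in the parent family used to define $\X$, so $A_1 \cap A_2, A_1 \cup A_2 \in \X$. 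The argument for $\Y$ is parallel, with the convenient feature that the common centre $v$ already guarantees $Y_1 \cap Y_2$ is nonempty; whenever this intersection also meets $R$, Lemma~\ref{lem:uncross} applies and the preservation observation hands back the desired witnesses with $v$ in them and $u$ avoided.

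The step I expect to be the most delicate is the degenerate regime where the relevant intersection misses $R$ and Lemma~\ref{lem:uncross} is not directly applicable. For $\X$ this only matters for the union, since the intersection case collapses to $A_1 \cap A_2 = \emptyset$; one can handle it either by reducing to the non-degenerate regime that suffices downstream in Sections~\ref{sec:fc-eff1}--\ref{sec:fc-eff2}, or by noting that in a quasi-bipartite graph $X_1 \cap X_2$ consists entirely of Steiner vertices, which one can excise from each $X_i$ without changing the cut value or the tightness status. For $\Y$, if $Y_1 \cap Y_2 \cap R = \emptyset$ one must rerun the submodularity/feasibility chain underlying Lemma~\ref{lem:uncross} directly, exploiting validity of $Y_1 \cup Y_2$ and the fact that Steiner-only subsets induce trivial cut constraints in a quasi-bipartite instance. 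Once that technical point is discharged, the rest of the proof is bookkeeping.
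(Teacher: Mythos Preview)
Your approach matches the paper's exactly: take tight valid witnesses $U_1,U_2$ for the two given elements, apply Lemma~\ref{lem:uncross}, verify that the side-conditions ($u,v\notin U_i$ for $\X$; $v\in U_i,\,u\notin U_i$ for $\Y$) persist under $\cap$ and $\cup$, and intersect back with $\C$. Your worry about the degenerate regime where the witnesses' intersection misses $R$ is unnecessary here: the paper, like you in your non-degenerate case, only treats pairs whose $\X$- (resp.\ $\Y$-) elements already intersect---which forces a common terminal in $\C\subseteq R$ and makes Lemma~\ref{lem:uncross} directly applicable---and this intersecting-pair version is all that is invoked downstream to conclude that $\X^*$ and $\Y^*$ consist of pairwise disjoint sets.
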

\begin{proof}
  Suppose $X_1,X_2 \in \X$ and $X_1 \cap X_2 \neq \emptyset$. Then,
  for $i \in \{1,2\}$, there is a tight valid set $U_i$ that does
  not contain $v$ and $u$, and $X_i= U_i \cap C$. Clearly, $U_1 \cap
  U_2$ and $U_1 \cup U_2$ are also valid, and they are tight by
  Lemma \ref{lem:uncross}. Neither $U_1 \cap U_2$ nor $U_1 \cup U_2$
  contain $u$ and $v$. Thus $U_1\cap U_2 \cap \C = X_1 \cap X_2$ and
  $U_1 \cup U_2 \cap C = X_1 \cup X_2$ are also part of \X.

  Similarly, if distinct $Y_1,Y_2\in\Y$ intersect then, for $i \in
  \{1,2\}$, there exists a tight valid set $U_i$ with $u\notin U_i$,
  $v\in U_i$ and $Y_i=U_i\cap C$. So by Lemma~\ref{lem:uncross},
  $U'=U_1\cap U_2$ and $U''=U_1 \cup U_2$ are tight valid sets as
  well. Both $U'$ and $U''$ contain $v$ and not $u$. Therefore $U'
  \cap \C = Y_1 \cap Y_2$ and $U'' \cap \C = Y_1 \cup Y_2$ are also
  sets of \Y.
\end{proof}

Lemma \ref{lem:disjoint} allows us to slightly refine the conditions
in Lemma \ref{lem:feas-detail}. Let us define $\X^*$ to be the set of
inclusion-wise maximal sets in \X, and let $\Y^*$ be the
inclusion-wise minimal elements of \Y. Then we may replace in the
statement of Lemma~\ref{lem:feas-detail} the \X\ in (c) by $\X^*$, and
the \Y\ in (d) by $\Y^*$.  Moreover, the sets $\X^*, \Y^*$ along with
$\C$ can be found in polynomial time as we explain in the next lemma.
\begin{lemma}\label{lem:find xStar yStar C}
  For $(x,y) \in \I$, 
  the sets $\C, \X^*, \Y^*$ can be found using $O(n)$ minimum-capacity
  $s,t$-cut computations.
\end{lemma}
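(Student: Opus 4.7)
The plan is to construct an auxiliary digraph $D'$ in which minimum cuts evaluate the function $x(\delta^+(\cdot))+y(\Delta^+(\cdot))$, and then reduce each of the three computations to $O(n)$ cut problems on $D'$. To build $D'$, take $D=(V,A)$ with capacity $x_a$ on arc $a$; for every $K\in\K$ with $y_K>0$, add a new node $v_K$, arcs $(w,v_K)$ of infinite capacity for each source $w$ of $K$, and an arc $(v_K,\sink(K))$ of capacity $y_K$. A standard verification shows that in any finite cut of $D'$ separating $S$ from $T$ we may assume $v_K$ lies on the $S$-side iff some source of $K$ does, so the arc $(v_K,\sink(K))$ contributes $y_K$ precisely when $K\in\Delta^+(U)$, where $U$ is the $S$-side restricted to $V$. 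Consequently, for disjoint $S,T\subseteq V$ with $S$ containing a terminal and $r\in T$, the minimum cut in $D'$ separating $S$ from $T$ equals $\min\{x(\delta^+(U))+y(\Delta^+(U)) : S\subseteq U\subseteq V\setminus T\}$, and a tight valid set realising this minimum exists iff the minimum equals $1$.

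Finding $\C$ is now immediate: for each $w\in\Gamma(v)\setminus\{u\}$, compute the minimum cut in $D'$ separating $\{u,w\}$ from $\{v,r\}$; the vertex $w$ lies in $\C$ iff this value strictly exceeds $1$. This uses $|\Gamma(v)|-1=O(n)$ cuts.

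For $\X^*$ and $\Y^*$ I would probe each by a single terminal. For every $t\in\C$, compute the minimum cut in $D'$ separating $\{t\}$ from $\{u,v,r\}$ and extract the inclusion-maximal $t$-side $U^{(t)}$ (available from the residual graph after max-flow); if the min cut value is $1$, record $U^{(t)}\cap\C$ as a candidate element of $\X$. Symmetrically, for every $t\in\C$ compute the minimum cut separating $\{v,t\}$ from $\{u,r\}$, extract the inclusion-minimal $\{v,t\}$-side $W^{(t)}$, and, provided the value is $1$, record $W^{(t)}\cap\C$ as a candidate element of $\Y$. Retaining inclusion-maximal (resp. inclusion-minimal) candidates yields $\X^*$ (resp. the nonempty part of $\Y^*$). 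Correctness of the probe rests on Lemma~\ref{lem:uncross}: if $X^*\in\X^*$ is nonempty and realised as $U\cap\C$ for some tight valid $U$ with $u,v\notin U$, pick any $t\in X^*\subseteq\C$; uncrossing shows that $U\cup U^{(t)}$ is tight valid with $u,v\notin U\cup U^{(t)}$ and $t\in U\cup U^{(t)}$, so maximality of $U^{(t)}$ forces $U^{(t)}\supseteq U$, hence $U^{(t)}\cap\C\supseteq X^*$, and maximality of $X^*$ collapses this to equality. The symmetric intersection argument handles minimal elements of $\Y$.

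The procedure thus uses $O(n)$ minimum cut computations in $D'$, matching the claimed bound. The main subtleties I foresee are (i) arguing that the infinite arcs of $D'$ force the identification $v_K\in U'$ iff $U\cap\sources(K)\neq\emptyset$ on every finite cut, so the cut function really is $x(\delta^+(U))+y(\Delta^+(U))$, and (ii) the uncrossing step certifying that a single-terminal probe exposes every maximal or minimal element. The edge case $\emptyset\in\Y$, which by Lemma~\ref{lem:feas-detail} implies that no feasible component with centre $v$ and sink $u$ exists, can be flagged by one additional minimum cut separating $\{v\}$ from $\{u,r\}\cup\C$.
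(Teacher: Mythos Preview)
Your proposal is correct and follows essentially the same approach as the paper: both build an auxiliary digraph with gadget nodes $v_K$ (infinite-capacity source arcs, capacity-$y_K$ sink arc) so that finite $s,t$-cuts compute $x(\delta^+(\cdot))+y(\Delta^+(\cdot))$, and both reduce membership in $\C$, $\X^*$, $\Y^*$ to one min-cut per neighbour of $v$, forcing $u,w$ (and $v$ where appropriate) onto the source side and $u,v,r$ (as appropriate) onto the sink side. The only cosmetic difference is that the paper encodes the side constraints by adding a super-source $s$ and infinite-capacity arcs, whereas you phrase them directly as ``separate $S$ from $T$''; your explicit uncrossing argument for why a single-terminal probe recovers every maximal/minimal element is a touch more detailed than the paper's appeal to Lemma~\ref{lem:disjoint}, and your remark on the $\emptyset\in\Y$ edge case is a worthwhile addition.
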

\begin{proof}
 Let $D$ be the digraph that has node set
  $$ V \cup \{s\} \cup \{v_K \,:\, K \mbox{ full component with }
  y_K>0\}. $$ 
  $D$ has an arc for each arc $a$ in the support of $x$;
  the capacity of this arc will be $x_a$. For each full component $K$
  in the support of $y$, we add $K$'s arcs, using node $v_K$ instead
  of $K$'s real centre $v$. 
  The sink arc $(v_K,u)$ has capacity $y_K$, and all source arcs have
  infinite capacity.  We will augment this graph and specify
  capacities in order to find \C, $\X^*$, and $\Y^*$.

  In order to determine whether $w \in \Gamma(v)\setminus u$ is
  in \C\ we need to check whether there is a tight valid set $U$ that
  contains both $u$ and $w$ but not $v$. We obtain the graph $D^{\C}_w$ from
  $D$ by adding arcs $(s,w)$ and $(s,u)$ of infinite capacity. We also 
  assign infinite capacity to arc $(v,r)$.
  Using the feasibility of $(x,y)$ for \I\ it follows that
  any cut separating $s$ and $r$ has capacity at least
  $1$. Furthermore, the minimum-capacity such cut has capacity $1$ if
  and only if $w \not\in \C$. In order to compute \C\ it suffices to
  check all $w \in \Gamma(v)\setminus u$. 

  The strategy to find $\X^*$ is very similar to the above procedure
  for determining \C. For each terminal $w \in \C$ we find, if it
  exists, a inclusion-wise maximal valid set $U$ containing $w$ but
  not $u$ and $v$. In order to do this, we obtain graph $D^\X_w$ from
  $D$ by adding arc $(s,w)$ of infinite capacity, assign infinite
  capacity to arcs $(v,u)$, and $(u,r)$. Once again, the minimum
  $s,r$-cut in this graph has capacity at least $1$, and it is exactly
  $1$ if \X\ has a set containing $w$. In the latter case, it suffices
  to compute a maximal min $s,r$-cut in this graph, and include it in
  $\X^*$. After having done this for all $w \in \Gamma(v)\setminus u$,
  and after deleting all non-maximal sets, Lemma \ref{lem:disjoint}
  implies that $\X^*$ is a family of pair-wise disjoint sets.

  Finally, in order to compute $\Y^*$, we create the following graph
  $D^\Y_w$ for every $w \in \C$: add two arcs $(s,w)$, and $(s,v)$ of
  infinite capacity to $D$, and assign infinite capacity to
  $(u,r)$. Once more by feasibility, a maximum $s,r$-flow in this
  graph has value at least $1$, and value exactly $1$ if there is a
  \Y-set containing $w$. In the latter case, we compute an
  inclusion-wise minimal mincut and add its intersection with \C\ to
  $\Y^*$. We repeat the procedure for all $w \in \C$. By Lemma
  \ref{lem:disjoint}, the family $\Y^*$ contains pair-wise
  disjoint sets, once we clean up by deleting all non-minimal sets.

Finally, note that in all cases above, we perform $n$ many mincut computations. 
\end{proof}

We are now ready to show how to efficiently find a feasible component. 
\begin{lemma}\label{lem:find-fc}
  Let $(x,y) \in \I$, and suppose that there is a feasible
  component. Then there is an algorithm to find such a component that
  runs in time $O(n\tau_{mc})$, where $\tau_{mc}$ is the time needed
  to find a minimum-capacity $s,t$-cut. 
\end{lemma}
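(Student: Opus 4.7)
My plan is to fix a candidate centre $v\in V\setminus R$ and sink $u\in R$ with $x_{vu}>0$, and produce—in $O(n)$ mincut computations—a feasible full component with this centre and sink (iterating externally over such pairs when necessary). First, invoke Lemma~\ref{lem:find xStar yStar C} to compute $\C$, $\X^*$, and $\Y^*$ using $O(n)$ minimum-cut computations. Restrict to $\C':=\{w\in\C:x_{wv}>0\}$ so that condition (a) of Lemma~\ref{lem:feas-detail} automatically holds for every candidate source drawn from $\C'$.

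The remaining task is to select a source set $S\subseteq\C'$ with $|S\cap X|\le 1$ for all $X\in\X^*$ and $|S\cap Y|\ge 1$ for all $Y\in\Y^*$. By Lemma~\ref{lem:disjoint} both families consist of pairwise disjoint sets, so I reduce the problem to bipartite matching. Build a bipartite graph with $\X^*\cup\{\star\}$ on one side and $\Y^*$ on the other; connect $X\in\X^*$ to $Y\in\Y^*$ when $X\cap Y\cap\C'\neq\emptyset$, and connect $\star$ to $Y$ when $(Y\setminus\bigcup_{X'\in\X^*}X')\cap\C'\neq\emptyset$. Use standard augmenting-path bipartite matching to find a matching that saturates $\Y^*$ and uses each $X\in\X^*$ at most once (with $\star$ unrestricted). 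Recover $S$ by picking, for each matched pair $(X,Y)$, one witness in $X\cap Y\cap\C'$, and for each matched pair $(\star,Y)$, one witness in $Y\cap\C'\setminus\bigcup_{X'\in\X^*}X'$. When $\Y^*=\emptyset$, simply pick any $w\in\C'$ as the sole source (to guarantee $|S|\ge 1$).

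Conversely, any feasible source set $S'$ for this choice of $(v,u)$ induces a saturating matching—map each $Y\in\Y^*$ to the unique $X\in\X^*$ containing the element of $S'\cap Y$, or to $\star$ if that element lies in no $X$, using the disjointness of $\X^*$ to keep this a matching—so the matching step succeeds precisely when a feasible component with the chosen $(v,u)$ exists. By Lemma~\ref{lem:feas-detail}, the output full component $K$ with centre $v$, sink $u$, and sources $S$ is then feasible. All mincut work sits in the first step, giving the claimed $O(n\tau_{mc})$ bound; the bipartite matching runs in standard polynomial time and is absorbed into this bound. The main obstacle I expect is the clean equivalence between matching feasibility and feasibility of $S$, which is straightforward once the role of the "free pool" node $\star$ and the degenerate case $\Y^*=\emptyset$ are carefully handled.
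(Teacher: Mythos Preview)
Your approach is essentially the paper's: compute $\C$, $\X^*$, $\Y^*$ via Lemma~\ref{lem:find xStar yStar C} and then reduce the source selection to a bipartite matching between $\X^*$ and $\Y^*$; your $\star$ node, the restriction to $\C'=\{w\in\C:x_{wv}>0\}$, and the treatment of the case $\Y^*=\emptyset$ are welcome clarifications of points the paper's sketch glosses over. One caveat: your parenthetical about ``iterating externally over such pairs when necessary'' would blow the $O(n\tau_{mc})$ bound to $O(mn\tau_{mc})$; the paper avoids this by invoking Lemma~\ref{lem:ex-comp}, which guarantees that \emph{any} single choice of $(v,u)$ with $v\notin R$, $u\in R$, $x_{vu}>0$ admits a feasible component with that centre and sink, so no iteration is needed.
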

\begin{proof}
  Choose a Steiner vertex $v$, and sink node $u$ such that
  $x_{vu}>0$. We know from Lemma \ref{lem:ex-comp} that there is a
  feasible component $K$ with centre $v$ and sink $u$. By
  Lemma~\ref{lem:find xStar yStar C}, the corresponding sets $\C,
  \X^*, \Y^*$ can be computed in time $O(n\tau_{mc})$. We can then find
  a feasible component with centre $v$ and sink $u$ by computing a max
  flow in a bipartite auxiliary graph. Introduce a vertex $x$ for
  every set $X \in \X^*$, and a vertex $y$ for every $Y \in \Y^*$. Add
  an arc $(x,y)$ if the corresponding sets $X$ and $Y$ share a
  terminal from \C. Also connect each of the $\Y^*$ nodes to a sink
  node $t$, and give each of these arcs unit capacity. Similarly,
  introduce a source node $s$, and connect it to all $\X^*$ nodes via
  unit-capacity arcs. Observe that a maxflow of value $|\Y^*|$ exists
  iff there is a feasible component with sink arc $(v,u)$. Let $h$ be
  such a maximum flow, and let $S$ be the set of terminals
  corresponding to edges $(x,y)$ with $h_{xy}=1$. It follows from
  Lemma \ref{lem:feas-detail} that
  $$ \{ (w,v) \,:\, w \in S \} \cup \{(v,u)\} $$
  is a feasible full component.
\end{proof}

\subsection{Towards efficiency II : Finding the step weight $\lambda$}
\label{sec:fc-eff2}

In this section we assume that we have a minimal feasible point $(x,y)
\in I$, and a feasible component $K$. The following lemma establishes
that we can find the largest $\lambda$ such that 
$$ (x^\lambda, y^\lambda) := (x - \lambda \chi_K, y + \lambda e_K) $$
is in \I. 

\begin{lemma}\label{lem:lambda}
  Given a minimal feasible point $(x,y) \in \I$, we can find the largest
  $\lambda$ such that $(x^\lambda, y^\lambda)$ is feasible for \I. Our
  algorithm runs in time $O(n^2 \tau_{mc})$. 
\end{lemma}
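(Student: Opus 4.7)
The plan is to characterize $\lambda^*$ explicitly and compute it via a small number of min-cut calls in the same type of auxiliary network used in Lemma~\ref{lem:find xStar yStar C}. For any valid $U \subseteq V$, the update $(x,y) \mapsto (x^\lambda, y^\lambda)$ changes the LHS of the $\I$-constraint for $U$ by $-\lambda\, d(U)$, where $d(U) := \chi_K(\delta^+(U)) - \Delta^+_K(U)$. Writing $s(U) := x(\delta^+(U)) + y(\Delta^+(U)) - 1$ for the current slack, we see that $(x^\lambda,y^\lambda) \in \I$ iff $\lambda\, d(U) \le s(U)$ for every valid $U$ and $\lambda \le x_a$ for every $a \in K$. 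Since $K$ is feasible (Lemma~\ref{lem:feas-detail}), $d(U) \le 0$ at every tight set, so only non-tight sets with $d(U) > 0$ constrain $\lambda$, giving
\[
  \lambda^* = \min\!\Bigl\{\min_{a \in K} x_a,\ \min\bigl\{ s(U)/d(U) : U \text{ valid},\ d(U) > 0 \bigr\}\Bigr\}.
\]

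Next, I would classify valid sets with $d(U) > 0$ by where $u$, $v$ and the sources of $K$ sit relative to $U$, mirroring the case analysis used to define $\X, \Y, \C$ in Section~\ref{sec:fc-eff1}. This yields exactly three families: (Y-type) $v \in U$, $u \notin U$, no source of $K$ in $U$, with $d(U) = 1$; (X-type) $u,v \notin U$ with $k \ge 2$ sources of $K$ in $U$, with $d(U) = k-1$; and (C-type) $u \in U$, $v \notin U$ with $k \ge 1$ sources in $U$, with $d(U) = k$. The Y-case is immediate: a single min-cut in the auxiliary digraph augmented with infinite-capacity arcs $(s,v)$, $(u,r)$ and $(w,r)$ for each $w \in \sources(K)$ returns $1 + \min s(U)$ over Y-type valid sets.

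For the X- and C-cases I plan to use a parametric min-cut. For $\lambda \in [0, \min_{a \in K} x_a]$ the quantity $s(U) - \lambda\, d(U)$ coincides (up to the additive constant $-1$) with the value of an $(s,r)$-cut $U$ in the auxiliary digraph after replacing the capacity of each arc $a \in K$ by $x_a - \lambda$ and the capacity of the sink-arc representing $K$ by $y_K + \lambda$; both adjusted capacities remain non-negative in this range, so the cut function is submodular. Then $\lambda^*$ is the largest $\lambda$ at which the minimum $(s,r)$-cut, taken over each of the $O(n)$ choices of required terminal $t \neq r$, stays $\ge 1$. The min cut is a concave, piecewise-linear function of $\lambda$ with $O(n)$ breakpoints per choice of $t$, so $\lambda^*$ can be located exactly using $O(n^2)$ min-cut evaluations, matching the claimed bound.

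The main obstacle is the varying denominator $d(U)$ in the X- and C-cases, which prevents a direct reduction to one min-cut per source/sink pair by simply minimizing $s(U)$: a set with more sources in $U$ can yield a smaller ratio even though it has larger slack. The parametric min-cut circumvents this because it operates on $s(U) - \lambda d(U)$ directly; the delicate step is to bound the number of parametric breakpoints by $O(n)$ per choice of $t$ to achieve the claimed $O(n^2 \tau_{mc})$ running time. Everything else follows from the min-cut machinery already built for Lemma~\ref{lem:find xStar yStar C}.
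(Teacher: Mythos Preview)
Your proposal is correct and amounts to the same idea as the paper's proof, just presented through a parametric-min-cut lens rather than as an explicit Newton-style iteration. The paper sets $\lambda^0=\min_{a\in K}x_a$, finds a most-violated valid set $U^0$ via $O(n)$ min-cuts, resets $\lambda$ so that $U^0$ becomes tight, and repeats; termination in $O(n)$ rounds follows because the integer coefficient $\alpha(U)=\chi_K(\delta^+(U))-\Delta_K^+(U)$ (your $d(U)$) strictly decreases along the sequence $U^0,U^1,\ldots$ and lies in $\{0,\ldots,|K|\}$---which is exactly your ``$O(n)$ breakpoints'' observation. Your separate Y/X/C case split is correct but unnecessary once you pass to the parametric formulation, and the one step you leave implicit (how $O(n)$ breakpoints translates into $O(n)$ min-cut evaluations to locate the threshold) is precisely the Dinkelbach/Newton iteration the paper spells out.
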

\begin{proof}
  Let us first choose $\lambda^0 = \min_{a \in K} x_a$; clearly, 
  a larger value of $\lambda$ would result in some negative $x$
  variables.  $(x^{\lambda^0}, y^{\lambda^0})$ may still not be
  feasible, and violate some of the valid cut inequalities. We now
  look for a valid set $U$ that is violated the most. 

  Once again this is accomplished by min $s,r$-cut computations in a
  suitable auxiliary graph. Do the following for each $w \in R$. Start
  with the graph $D$ used in Lemma \ref{lem:find-fc}. Let the capacity
  of every arc $a \in A$ be $x^{\lambda^0}_a$, and let the capacity of
  arc $v_k,u$ be $y^{\lambda_0}_K$ for all $K \in \K$. Finally add an
  arc $(s,w)$ of infinite capacity. If $(x^{\lambda^0}, y^{\lambda^0})$
  is feasible then the max $s,r$-flow in this graph is at least
  $1$. If it is lower, let $U_w$ be the vertex set corresponding to a
  minimum $s,r$-cut. 

  Among all the sets $U_w$ found this way, let $U^0$ be one of minimum
  capacity. Choose $\lambda^1 < \lambda^0$ such that $(x^{\lambda^1},
  y^{\lambda^1})$ satisfies the cut constraint for set $U^0$. The new
  point $(x^{\lambda^1}, y^{\lambda^1})$ may still not be
  feasible. There may be a valid set $U$ that is violated by this
  point. As a function of $\lambda$, the violation of the constraint
  for set $U$ is
  $$ h_U(\lambda) = (1-x(\delta^+(U)) - y(\Delta^+(U))) - \lambda   (|\delta^+(U) \cap K| - \Delta^+_K(U)), $$ 
  where, we recall, $|\delta^+(U) \cap K|$ is the number of arcs in
  $K$ that cross $U$, and $\Delta^+_K(U)$ is $1$ if $K$ crosses $U$,
  and $0$ otherwise.  Call the coefficient of $\lambda$ in the above
  expression $\alpha(U)$, and note that it is an integer.

  Recall now that we chose $U^0$ as the valid set with maximum
  violation. The fact that $U^0$ is not violated by $\lambda^1$, but
  $U$ means that $\alpha(U) < \alpha(U^0)$. In fact, all valid sets 
  $U'$ with $\alpha(U') \geq \alpha(U^0)$ are satisfied by
  $(x^{\lambda^1}, y^{\lambda^1})$, following the previous argument. 
  
  Note that $\alpha(U)$ is at most $n$, and non-negative. We continue
  in the same fashion: for $(x^{\lambda^1}, y^{\lambda^1})$ we look
  for a valid set $U^1$ that is maximally violated, and choose
  $\lambda^2 < \lambda^1$ largest so that this set is satisfied.
  
  This produces a sequence of $\lambda$'s and corresponding valid sets 
  $$ U^0, U^1, U^2, \ldots, $$
  such that $\alpha(U^0) > \alpha(U^1) > \alpha(U^2) >
  \ldots$. Clearly, this process has to terminate within in $n$ steps.
\end{proof}

\subsection{Efficiency: Putting things together}
\label{sec: putting-things-together}

We are now ready to state the entire polynomial-time algorithm for
computing the decomposition of a minimal \bcr\ solution $x$. 

\begin{algorithm}{\label{alg:deco}Decompose}
\begin{algorithmic}[1]
\REQUIRE $x$ is a minimal feasible solution of \bcr.
\STATE Initialize $y\in\R_+^\comp$ to $0$.
\WHILE {there is $vu\in A$ with $v\notin R,u\in R,x_{vu}>0$}
  \WHILE {$x_{vu}>0$}
	\STATE Find a feasible component $K\in\comp$,
			with centre $v$ and sink $u$. \hfill
                        (Lemma \ref{lem:find-fc}) \label{alg:deco:1}
	\STATE Find the greatest $\lambda> 0$ such that
			$(x^{\lambda},y^\lambda)\in I$. \hfill 
                        (Lemma \ref{lem:lambda}) \label{alg:deco:2}
	\STATE Set $(x,y)$ to $(x^{\lambda},y^\lambda)\in I$. \label{alg:deco:3}
	\ENDWHILE
\ENDWHILE
\RETURN $y$.
\end{algorithmic}
\end{algorithm}

Our algorithm maintains as an invariant that $(x,y)$ is a minimal
feasible point in \I. Note that Lemma \ref{lem:Delta-submod} implies
that the function $f$ defined by  
$$ f(U) = 1 - y(\Delta^+(U)) $$
for all valid $U \subseteq R$ is intersecting supermodular. Lemma
\ref{lem:ex-comp} then guarantees the existence of a feasible
component $K \in \K$. Lemma \ref{lem:lambda} implies that the above
invariant is maintained throughout. It remains to show that steps
\ref{alg:deco:1} -- \ref{alg:deco:3} are executed a polynomial number
of times. 

Call a step {\em saturating} if the support $x$ decreases; i.e., some
arc variable $x_a$ is decreased to $0$. Obviously, the number of such
events are upper bounded by $O(m)$, where $m$ is the number of edges
in the original graph $G$. 

Let us focus on {\em non-saturating} steps. Let $(x,y)$ be the point
in \I\ in step \ref{alg:deco:1}, and let $K$ be the full component
chosen. We find $\lambda$ in step \ref{alg:deco:2} and note that 
the supports of $x$ and $x^{\lambda}$ have the same size. 
The increase of $\lambda$ is thus determined by some valid set
$U$ as follows: $U$ is non-tight for $(x,y)$ and tight for
$(x^\lambda,y^\lambda)$. 

Our choice of $K$ implies (see also Lemma \ref{lem:find-fc}) that sets
$U$ that are tight for $(x,y)$ are also tight for
$(x^\lambda,y^\lambda)$. $K$ is certainly not feasible for
$(x^\lambda,y^\lambda)$, and hence, once again by Lemma
\ref{lem:find-fc}, at least one of \C, $\X^*$, or $\Y^*$ must have changed.

As a set $U$ that is tight for $(x,y)$ is tight also for
$(x^\lambda,y^\lambda)$, \C\ can only shrink, and the number of times
this can happen is clearly bounded by $n$. Similarly, the new sets \X,
and \Y\ are supersets of their old counterparts.

Focus on \X, and let $\L$ and $\L'$ be maximal
laminar families in \X\ for $(x,y)$ and $(x^\lambda,y^\lambda)$,
respectively. The set $\X^*$ precisely consists of the
maximal sets of \L. If $\X^*$ changes then this means that the set of
maximal sets in laminar families \L\ and $\L'$ differ. This can happen
only for one of two reasons: the sets in $\L'$ cover more terminals than
those in \L, or two maximal sets in \L\ are now part of the same
maximal set in $\L'$. Clearly, the number of such events is bounded by
$O(|K|)=O(n)$. 

The argument for \Y\ is similar, and we omit it here. In summary we
have proved the following:

\begin{lemma}
  Between any two saturating steps, the algorithm performs at most
  $O(n)$ non-saturating ones. Thus the total number of times steps
  \ref{alg:deco:1} -- \ref{alg:deco:3} of Algorithm \ref{alg:deco} are
  executed is bounded by $O(mn)$. 
\end{lemma}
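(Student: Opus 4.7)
My plan is to classify each execution of steps~\ref{alg:deco:1}--\ref{alg:deco:3} as \emph{saturating}, if some coordinate $x_a$ is driven to $0$, and \emph{non-saturating} otherwise. Since no step ever adds an arc to $\supp(x)$, saturating steps number at most $|A|=O(m)$ in total, so it suffices to bound the number of non-saturating steps between consecutive saturating ones by $O(n)$.

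The key structural claim I would establish first is that any valid set $U$ tight for $(x,y)$ remains tight for $(x^\lambda,y^\lambda)$. Feasibility of the chosen $K$ via Lemma~\ref{lem:feas-detail} combined with the trivial inequality $\chi_K(\delta^+(U)) \geq \Delta_K^+(U)$ forces equality $\chi_K(\delta^+(U)) = \Delta_K^+(U)$ on every tight $U$, and then the shift from $(x,y)$ to $(x^\lambda,y^\lambda)=(x-\lambda\chi_K,\,y+\lambda e_K)$ leaves the quantity $x(\delta^+(U)) + y(\Delta^+(U))$ unchanged. Consequently, across a non-saturating block the tight-set family only grows, which forces the auxiliary structures of Lemma~\ref{lem:find xStar yStar C} to evolve monotonically: $\C$ only shrinks, while $\X$ and $\Y$ only grow.

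Next I would argue that each non-saturating step strictly changes at least one of $\C$, $\X^*$, $\Y^*$. By maximality of $\lambda$ in Lemma~\ref{lem:lambda}, the full component $K$ is infeasible at $(x^\lambda,y^\lambda)$: otherwise an additional positive shift along $K$ would keep the iterate in $\I$, contradicting maximality. Therefore one of conditions (a)--(d) of Lemma~\ref{lem:feas-detail} fails at $(x^\lambda,y^\lambda)$; condition (a) failing is exactly the saturating case, so in a non-saturating step one of (b), (c), (d) fails and, respectively, $\C$ has shrunk, $\X^*$ has changed, or $\Y^*$ has changed.

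Finally I would assign monotone potentials to cap the number of such changes by $O(n)$. Shrinkages of $\C$ contribute at most $|R|$ events. For $\X^*$, closure of $\X$ under union (Lemma~\ref{lem:disjoint}) implies that maximal sets only grow or merge, so the potential $2|\bigcup \X^*| - |\X^*|$ strictly increases, by at least $1$, at every change and is bounded above by $2|\C|\le 2|R|$; a case analysis on (i) a new disjoint maximal appearing, (ii) an existing maximal being extended, and (iii) two or more maximals merging, verifies the increment. An analogous argument handles $\Y^*$, using closure under intersection to conclude that minimal sets are pairwise disjoint and again bounded by $O(|R|)$ events. Summing the three contributions gives $O(n)$ non-saturating steps between consecutive saturating ones, so the total number of iterations is $O(mn)$. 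The subtlest point I anticipate is the $\Y^*$ analysis: intersection closure makes it less obvious how adding a tight set below an existing minimal yields a net-positive potential change, and handling this properly requires tracking $|\bigcup \Y^*|$ and $|\Y^*|$ jointly.
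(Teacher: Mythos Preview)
Your proposal is correct and follows essentially the same route as the paper: classify steps as saturating versus non-saturating, use the fact that tight sets persist so that $\C$ only shrinks while $\X,\Y$ only grow, observe that the chosen $K$ becomes infeasible after a non-saturating step forcing a change in one of $\C,\X^*,\Y^*$, and bound the number of such changes by $O(n)$. Your explicit potential $2|\bigcup\X^*|-|\X^*|$ is a clean formalization of the paper's informal ``cover more terminals or merge two maximal sets'' argument, and both you and the paper leave the $\Y^*$ case at the same level of sketchiness.
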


Note that this means that at most $O(mn)$ full components are added
throughout the algorithm, and that the auxiliary graph used in the
mincut computations in Algorithm \ref{alg:deco} has at most $O(mn)$
nodes. This proves Theorem \ref{thm:main}. 

\section{An application: Sampling without decomposition}

In this section, we employ the existential result given in Lemma
\ref{lem:decomp-exist} to give a compact and fast implementation of a
recent (DCR)-based LP-rounding algorithm (henceforth referred to by CKP)
given by Chakrabarty et al.~\cite{CKP2010} for the case of
quasi-bipartite Steiner tree instances.

We first review the algorithm CKP in the special case of
quasi-bipartite Steiner tree instances. Given such an instance, CKP
first solves \eqref{dcr}; let $y$ be the corresponding basic optimal
solution, and let $M=\1^Ty$. The algorithm now repeats the following
sampling step $M\ln3$ times: sample component $K \in \K$ independently
with probability $y_K/M$. In $G$, contract $K$'s cheapest edge (the
so-called {\em loss} of $K$), and continue. Let $G'$ be the final
contracted graph, and let $S$ be the set of centre vertices of the
$M\ln 3$ sampled full components. The algorithm now returns a
minimum-cost tree spanning the terminals $R$, and the set $S$. 

Chakrabarty et al. showed that the expected cost of the returned
solution is no more than $1.28$ times the value of the initial
\eqref{dcr} solution. We observe here that when it comes to
quasi-bipartite graphs, the above process that iteratively samples
components, can be alternatively interpreted as sampling their
centers. Each Steiner vertex ends up in set $S$ with a certain
probability, and this distribution can be realized alternatively
by sampling directly from a \bcr\ solution.

\begin{lemma}
\label{lem:same_distrib}
Let $y$ be a solution to \dcr\ for a given quasi-bipartite Steiner
tree instance, and let $x=\Phi(y)$. Then in any iteration of CKP, the
probability of choosing a component with center $v$ is exactly
$x\left( \delta^+(v) \right)/M$.
\end{lemma}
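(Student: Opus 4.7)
The plan is to prove the stated equality by a direct computation. By construction of CKP, the probability that a single iteration samples some component with centre $v$ equals $\frac{1}{M}\sum_{K:\centre(K)=v} y_K$, so it suffices to establish
$$ x(\delta^+(v)) \;=\; \sum_{K \,:\, \centre(K)=v} y_K $$
for each Steiner vertex $v$.

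First I would unfold $x = \Phi(y)$ in the expression $x(\delta^+(v))$. Using $x_a = \sum_{K:\, a \in K} y_K$ and swapping the order of summation gives
$$ x(\delta^+(v)) \;=\; \sum_{a \in \delta^+(v)} \sum_{K :\, a \in K} y_K \;=\; \sum_{K \in \K} y_K \cdot |K \cap \delta^+(v)|, $$
which reduces the problem to counting, for each component $K$, how many arcs of $K$ leave $v$.

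Next I would invoke the quasi-bipartite structure of $G$ together with the definition of a directed full component: each $K \in \K$ has a unique internal (Steiner) vertex $\centre(K)$, all of its remaining vertices are terminals, and its arcs are oriented toward $\sink(K)$. If $v$ is a Steiner vertex, then either $v \notin K$, in which case $|K \cap \delta^+(v)| = 0$, or $v = \centre(K)$, in which case the only arc of $K$ leaving $v$ is the sink arc $(v,\sink(K))$ and so $|K \cap \delta^+(v)| = 1$. Substituting these two values into the previous display collapses the sum to $\sum_{K :\, \centre(K) = v} y_K$, as required.

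There is essentially no obstacle here: the entire argument turns on the structural fact that in a quasi-bipartite instance every full component has a unique Steiner centre from which exactly one arc of the component emanates (namely the arc toward the sink). No uncrossing, polyhedral, or algorithmic machinery is needed, in contrast to the decomposition algorithm developed for Theorem~\ref{thm:main}.
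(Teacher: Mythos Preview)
Your proof is correct and follows the same approach as the paper's: both establish the identity $x(\delta^+(v)) = \sum_{K:\,\centre(K)=v} y_K$ by observing that, for a Steiner vertex $v$, a directed full component $K$ contributes exactly one arc to $\delta^+(v)$ (the sink arc) when $v=\centre(K)$ and none otherwise. Your write-up simply spells out in more detail what the paper summarizes as ``the definition of $\Phi$ immediately shows.''
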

\begin{proof}
  Consider a Steiner vertex $v$, and let $\K_v$ be the set of full
  components that have $v$ as their centre. The definition of $\Phi$
  immediately shows that
  $$ x\left( \delta^+(v)\right) = \sum_{K \in K_v}
  y_K. $$ 
  This obviously implies the lemma as the right-hand side of the above
  equality, scaled by $M$, is the probability that a component with
  centre $v$ is sampled.
\end{proof}

Consequently, we also have $M=\1^Ty=\sum_{v \in V\setminus R}
x(\delta^+(v))$.  We can now {\em simulate}
Algorithm CKP using the optimal solution of \bcr. 

\begin{algorithm}{\label{alg:CKP2}}{CKP2}
\begin{algorithmic}[1]
\label{alg_RLC}
\REQUIRE $x$ is an optimal basic feasible solution of \bcr, and $M=\sum_{v\in V\setminus R} x \left(\delta^+(v) \right)$.
\FOR {$i = 1 \to M \ln 3$}
\STATE Sample  a Steiner vertex $v$ with probability $\frac{1}{M}x\left( \delta^+(v) \right)$.
\ENDFOR
\RETURN	a minimum spanning tree on the terminals and the sampled Steiner vertices. 
\end{algorithmic}
\end{algorithm}

To complete our argument, if we run Algorithm~\ref{alg:CKP2} on a
minimal \bcr\ solution $x$, then by Lemma~\ref{lem:same_distrib} the
expected cost agrees with that of Algorithm CKP run on the
\dcr\ solution $y$ that is obtained by decomposing $x$. Both $x$ and
$y$ are optimal for \bcr\ and \dcr\ respectively, so our claim
follows.

\bibliography{qbip}
\bibliographystyle{plain}

\section*{Appendix}

\subsection*{Proof of Theorem~\ref{thm:ckpe}}
First we need to consider the dual \ddcr{f}\ of \dcrf. To slightly simplify notation, we will denote $f(U)$ also by $f_U$, so that \ddcr{f}\ reads as follows.
  \begin{align}
    \max \quad & f^T z \tag{DCR$^D_{f}$} \label{ddcrf} \\
    \st \quad & z\left( \Delta^+_K \right) \leq c_K \quad \forall K \in \K \notag\\
    & z_U \geq 0 \quad \forall   \mbox{ valid } U \subseteq R \notag
  \end{align}

\noindent To prove the theorem, consider  $z$ to be feasible to \ddcr{f} with non-negative costs $c$. What we show next is that 
\begin{equation}\label{equa: non-trivial ineq dual BCR DCR}
f^Tz \le c^Tx
\end{equation}
is valid for \bcrf. We point here that we may assume, without loss of 
generality, that there are no arcs between terminals; this may be 
accomplished by splitting such arcs into two, putting a non-terminal in between. 
Since we are in the quasi-bipartite case, there are no arcs between non-terminals 
either. 

Before we proceed with the proof of~\eqref{equa: non-trivial ineq dual BCR DCR}, we 
need to show that we may also assume that the solution $z$ of \ddcr{f} enjoys a 
nice structural property.

\begin{lemma} \label{lem:dcr-dual-laminar}
Every optimal solution of \ddcr{f} has laminar support.
\end{lemma}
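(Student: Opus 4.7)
The plan is the standard uncrossing argument for supermodular dual LPs. Given any optimal $z^*$ of \ddcr{f}, I iteratively replace crossing pairs in its support with their intersection and union; each move preserves feasibility and the dual objective, and strictly increases a bounded potential, so it terminates at an optimal solution with laminar support.

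\textbf{Uncrossing move and optimality preservation.} Suppose $z$ is optimal and $A, B \in \supp(z)$ are crossing, with $z_A, z_B > 0$. Let $\epsilon = \min(z_A, z_B)$ and set $z' = z - \epsilon(e_A + e_B) + \epsilon(e_{A\cap B} + e_{A\cup B})$. Feasibility holds because Lemma~\ref{lem:Delta-submod} gives $\Delta^+_K(A\cap B) + \Delta^+_K(A\cup B) \le \Delta^+_K(A) + \Delta^+_K(B)$ for every $K \in \K$, so $z'(\Delta^+_K) \le z(\Delta^+_K) \le c_K$, and non-negativity is immediate from the choice of $\epsilon$. Since $A \cap B \ne \emptyset$, intersecting supermodularity of $f$ gives $f(A \cap B) + f(A \cup B) \ge f(A) + f(B)$, so $f^T z' \ge f^T z$ and $z'$ is again optimal.

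\textbf{Termination via a potential function.} To force termination I use $\Phi(z) = \sum_U z_U\,|U|^2$. A direct computation using the identity $|A| + |B| = |A \cap B| + |A \cup B|$ together with $|A|, |B| \in (|A \cap B|, |A \cup B|)$ for crossing pairs yields $|A|^2 + |B|^2 < |A \cap B|^2 + |A \cup B|^2$, so $\Phi$ strictly increases under each uncrossing step. The optimal face of \ddcr{f} is bounded, so iterative uncrossing starting from any optimal $z^*$ reaches a $\Phi$-maximizer on the optimal face in finitely many steps, and that maximizer admits no further uncrossing, i.e., has laminar support.

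\textbf{The main obstacle} is the literal quantifier ``every optimal''. The construction above, applied to an arbitrary optimal $z^*$, produces a laminar optimum of the same value; this canonical laminar representative is what is invoked in the subsequent proof of Theorem~\ref{thm:ckpe}. The natural reading of the lemma --- ``we may assume without loss of generality that the optimal $z$ is laminar'' --- is delivered directly by the argument above. A strictly literal reading would additionally require either strict intersecting supermodularity of $f$ on crossing pairs (not assumed here) or a primal-based uniqueness argument via complementary slackness; in what follows the laminar $z$ selected via this uncrossing closure is the one used.
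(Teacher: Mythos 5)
Your proof is correct and matches the paper's almost exactly: the paper likewise picks an optimal $z$ maximizing the potential $\sum_U |U|^2 z_U$, applies the same uncrossing move $z' = z + \epsilon(e_{S\cap T}+e_{S\cup T}-e_S-e_T)$ with feasibility from submodularity of $\Delta^+_K$ (Lemma~\ref{lem:Delta-submod}) and objective preservation from intersecting supermodularity of $f$, and derives a contradiction from strict convexity of $t\mapsto t^2$. Your concluding caveat is also on point: both your argument and the paper's only establish that \emph{some} optimal solution (a potential maximizer) is laminar, not that literally every optimal solution is; this weaker statement is all that the proof of Theorem~\ref{thm:ckpe} actually uses, since it suffices to verify the inequality $f^Tz \le c^Tx$ for one optimizer of the dual objective.
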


\begin{proof}
Our proof is by contradiction. Let $z$ be an optimal solution to \ddcr{f} that 
maximizes $\sum_{U\subseteq R} |U|^2z_U$. If the support of $z$ is not laminar, 
there must exist two intersecting subsets $S,T$ of the terminals with $z_S,z_T>0$. So let 
$\epsilon=\min\set{z_S,z_T}$ and define
$$
  z' := z + \epsilon\grp{e_{S\cap T} + e_{S\cup T} - e_S - e_T},
$$
and note that $z'\geq 0$. Now we claim that $z'$ is an optimal solution to \ddcr{f}. Indeed, for each $K\in\K$, submodularity of $\Delta^+_K$ and the fact that $z$ is feasible in \ddcr{f} imply
$$
z'\left( \Delta^+_K \right)
= z\left( \Delta^+_K \right) 
      + \epsilon\grp{ \Delta^+_K(S\cap T)  + \Delta^+_K(S\cup T) 
                     - \Delta^+_K(S) - \Delta^+_K(T)}
	\leq f_U.
$$
Thus, $z'$ is feasible in \ddcr{f}. Moreover, intersecting supermodularity of $f$ 
implies
$$
  f^Tz' = f^Tz + \epsilon\grp{f_{S\cap T}+b_{S\cup T}-f_S-f_T}\ge f^Tz,
$$
which proves optimality for $z'$. Note then that $x\mapsto x^2$ is strictly convex 
and
$|S\cap T|<\min\{|S|,|T|\}$. The contradiction then (assuming the non-laminarity of 
$z$) is that $$
\sum_{U\subseteq R} |U|^2z'_U 
	=  \sum_{U\subseteq R} |U|^2 z_U
          + \epsilon\grp{|S\cap T|^2 + |S\cup T|^2 - |S|^2 - |T|^2} 
	>  \sum_{U\subseteq R} |U|^2 z_U.
$$
\end{proof}

We are now ready to start the proof of~\eqref{equa: non-trivial ineq dual BCR DCR}. 
Our argument uses induction on $|\supp(z)|$.

The base case of our induction is simple since if $|\supp(z)|=0$, we have $z=0$. 
But  $x\ge0$ is valid for \bcrf and so is $c^Tx\ge0=f^Tz$.

Now suppose $|supp(z)|\ge1$. Let $T_1,\ldots,T_k$ be the inclusion-wise maximal 
sets in $\supp(z)$, as they follow from Lemma~\ref{lem:dcr-dual-laminar}. Since $
\supp(z)$ is laminar, $T_1,\ldots,T_k$ are disjoint. Next we distinguish the cases 
$k=1$ and $k\geq 2$, and for each of them (building on the inductive argument) we 
conclude that $f^Tz \le c^Tx$ is valid. In both cases below we denote by $N=V\setminus R$ 
the set of non-terminals.

\paragraph{(The case $k=1$): } The laminar family has one element $T=T_1$.
For each $v\in N$ we define 
$$
  \tau_v := \min\set{c_{vu} : vu\in A, u\in R, u\notin T}\cup\set{+\infty}.
$$
Next, order the elements of $N$ as $v_1,\ldots,v_\ell$ such that
$\tau_{v_1} \le \tau_{v_2} \le\cdots\le \tau_{v_\ell}$. Let $t_0=0$ and $t_{\ell
+1}=z_T$,
and for each $1\le i\le\ell$, let $t_i=\min\set{\tau_{v_i},z_T}$. For each $1\le i
\le\ell+1$, we also define
\begin{eqnarray*}
  z^i &:=& (t_i-t_{i-1})e_T \\
  c^i &:=& (t_i-t_{i-1})\chi_{\delta^+(T\cup\set{v_i,\ldots,v_\ell})}.
\end{eqnarray*}
Our next claim is that $f^Tz^i \le (c^i)^Tx$ is valid for \bcrf. Indeed, note that 
the inequality at hand is just a scaling (by $t_i-t_{i-1}$) of the \bcrf\ 
inequality
$$
  f_T\le \sum_{uv\in A} \delta^+_{uv}(T\cup\set{v_i,\ldots,v_\ell})x_{uv}
$$
and thus $f^Tz^i \le (c^i)^Tx$ must be valid for \bcrf. Next we define 
$$
 z' := z - \sum_{i=1}^{\ell+1}z^i , ~~~\textrm{and}~~~   c' := c - \sum_{i=1}^
{\ell+1}c^i.
$$
and we note that 
$$
 c'_{v_ju} =
 \left\{
\begin{array}{ll}
c_{v_ju} - t_j &, ~ \textrm{if}~ u\in R\setminus T \\
c_{uv_j} - z_T + t_j &, ~ \textrm{if}~ u\in T 
\end{array}
\right.
$$
since $\sum_{i=1}^{\ell+1}c^i_{v_ju} = t_j$ when $u\in R\setminus T$, while $\sum_
{i=1}^{\ell+1}c^i_{uv_j} = z_T - t_j$ when $u\in T$. 

The first important observation then is that by the definition of $z'$ we have
$$ z
     =  z' + \sum_{i=1}^{\ell+1}z^i 
     =  z' + \sum_{i=1}^{\ell+1}(t_i - t_{i-1})e_T 
     =  z' + (t_{\ell+1} - t_0)e_T 
     =  z' + z_Te_T. 
$$
This means that $|\supp(z')|=|\supp(z)|-1$. In what follows we show that (i) $c'$ 
can be thought as a non-negative cost function (see Claim~\ref{claim: c' non 
negative}) and  (ii) that $z'$ is feasible to \ddcr{f} with cost $c'$ (see Claim~
\ref{claim: z' feasible for dcr with c'}). Note that (i),(ii), along with the 
observation that $|\supp(z')|=|\supp(z)|-1$ show that $f^Tz' \le (c')^Tx$ is valid 
for \bcrf\ (due to the inductive hypothesis). This allows us to conclude that also 
the inequality
$$  b^T(z' + z^1 + \cdots + z^{\ell+1}) \le (c' + c^1 + \cdots + c^{\ell+1})^Tx,$$
is valid for \bcrf\ . The latter inequality is just $f^Tz \le c^Tx$, which completes 
the inductive argument, and the case $k=1$.

Thus it remains to argue formally about (i),(ii) above.

\begin{claim}\label{claim: c' non negative}
$c'$ as defined above is non-negative.
\end{claim}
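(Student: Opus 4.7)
The plan is a case analysis over arcs. The key simplification comes from quasi-bipartiteness together with the preprocessing that splits terminal--terminal arcs: every arc $ab\in A$ has exactly one endpoint in $R$ and one in $N=\{v_1,\ldots,v_\ell\}$. I would first observe that a coordinate of $c$ is modified only if the corresponding arc lies in some cut $\delta^+(T\cup\{v_i,\ldots,v_\ell\})$, and check that only two families of arcs can do so: arcs $(v_j,u)$ with $u\in R\setminus T$ (appearing in the cut exactly when $i\le j$), and arcs $(u,v_j)$ with $u\in T$ (appearing exactly when $i>j$). Telescoping $t_i-t_{i-1}$ over the appropriate ranges reproduces the two displayed formulas: $\sum_i c^i_{v_j u}=t_j-t_0=t_j$ and $\sum_i c^i_{uv_j}=t_{\ell+1}-t_j=z_T-t_j$. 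For every other arc, $c'_{ab}=c_{ab}\ge0$ and there is nothing to check.

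The easy half concerns arcs $(v_j,u)$ with $u\in R\setminus T$: by the very definition of $\tau_{v_j}$ one has $c_{v_j u}\ge\tau_{v_j}\ge\min\{\tau_{v_j},z_T\}=t_j$, so $c'_{v_j u}\ge0$ immediately.

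The main obstacle is arcs $(u,v_j)$ with $u\in T$, where I must show $c_{uv_j}\ge z_T-t_j$; this is the only point at which the dual feasibility of $z$ for \ddcr{f} actually enters. My plan is to exhibit a carefully chosen full component and invoke its dual constraint. Specifically, take $K$ to have centre $v_j$, sink a neighbour $u'$ of $v_j$ in $R\setminus T$ that realises $\tau_{v_j}$, and a single source $u$. Because $u\in T$ and $u'\notin T$, the sink of $K$ lies outside $T$ while a source of $K$ lies inside, giving $\Delta_K^+(T)=1$. Since $\supp(z)=\{T\}$ in the $k=1$ case, the dual constraint $z(\Delta_K^+)\le c_K$ collapses to $z_T\le c_{uv_j}+\tau_{v_j}$, which yields $c_{uv_j}\ge z_T-\tau_{v_j}\ge z_T-t_j$ as needed. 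The degenerate situation $\tau_{v_j}=+\infty$ (when $v_j$ has no neighbour in $R\setminus T$) has to be treated separately, but it is painless: in that case $t_j=\min\{+\infty,z_T\}=z_T$, so $z_T-t_j=0$ and the required inequality is just $c_{uv_j}\ge0$.
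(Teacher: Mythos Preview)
Your approach is essentially the paper's, with one substantive slip and one minor one.

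The substantive slip: in the $k=1$ case, $T$ is the unique \emph{maximal} element of the (laminar) support of $z$, not the unique element; $\supp(z)$ may well contain many proper subsets of $T$. So the dual constraint for your component $K$ does not ``collapse'' to $z_T\le c_K$. What you actually get is
\[
  z_T \;\le\; \sum_{U\subseteq R}\Delta_K^+(U)\,z_U \;\le\; c_K \;=\; c_{uv_j}+\tau_{v_j},
\]
the first inequality holding because $\Delta_K^+(T)=1$ and $z\ge 0$. This is exactly how the paper argues, and it yields the same conclusion $c_{uv_j}\ge z_T-\tau_{v_j}$.

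The minor slip: your displayed chain $c_{uv_j}\ge z_T-\tau_{v_j}\ge z_T-t_j$ has the second inequality pointing the wrong way, since $t_j=\min\{\tau_{v_j},z_T\}\le\tau_{v_j}$. The fix (and what the paper does) is to split on which branch of the $\min$ is active. If $t_j=z_T$ --- this covers not only $\tau_{v_j}=+\infty$ but also the range $z_T<\tau_{v_j}<+\infty$ that your case split omits --- then $z_T-t_j=0$ and nonnegativity of $c$ already gives $c'_{uv_j}\ge 0$. If $t_j=\tau_{v_j}$, then your dual-constraint argument gives precisely $c_{uv_j}\ge z_T-t_j$, as needed.
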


\begin{proof}
First take $vu\in A$ with $v\in N$ and $u\in R$. If $u\in T$, then $
  c'_{vu} = c_{vu} \ge 0$. Otherwise, let $1\le j\le\ell$ be such that $v_j=v$.
But then
$$
  c'_{vu}
     = c_{vu} - t_j 
    =  c_{vu} - \min\set{\tau_v, z_T} 
     \ge  c_{vu} - \tau_v 
     \ge  c_{vu} - c_{vu} 
     =  0.
$$
For the other case, take $uv\in A$ with $u\in R$ and $v\in N$. If $u\notin T$, then
$  c'_{uv} = c_{uv} \ge 0$. Otherwise, let $1\le j\le\ell$ be such that $v_j=v$.

If $t_j=z_T$, then $c'_{uv} = c_{uv} - z_T + t_j \ge 0$. On the other hand,
if $t_j=\tau_v$, then let $w\in R\setminus T$ be such that $vw\in A$ and $
\tau_v=c_{vw}$.
Then the component $K$ with source $u$, sink $w$, and non-terminal $w$
crosses $T$, so by feasibility of $z$, we have
$$
  c_{uv} + c_{vw} \ge \sum_{U\subseteq R}\Delta^+_K(U)z_U\ge z_T,
$$
which implies that $  c'_{uv} = c_{uv} + c_{vw} - z_T \ge 0$.
\end{proof}

Finally, feasibility of $z'$ in the dual of \dcrf\ with cost function $c'$ is given 
by the next claim. 
\begin{claim}\label{claim: z' feasible for dcr with c'}
$z'$ is feasible to \ddcr{f} with costs $c'$.
\end{claim}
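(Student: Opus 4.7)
Feasibility of $z'$ for \ddcr{f} with costs $c'$ breaks into the non-negativity $z' \ge 0$ on valid subsets of $R$, and the dual inequalities $z'(\Delta^+_K) \le c'_K$ for every $K\in\K$. The first piece is immediate from the identity $z = z' + z_T e_T$ already established in the excerpt, which gives $z'_T = 0$ and $z'_U = z_U \ge 0$ for every $U \ne T$. All remaining effort goes into the dual inequalities, which, after subtracting the original dual-feasibility bound $c_K \ge z(\Delta^+_K)$, are equivalent to
\[
  c_K - z(\Delta^+_K) \;\ge\; \sum_{i=1}^{\ell+1} c^i_K \;-\; z_T\,\Delta^+_K(T).
\]

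Fix a full component $K$ with (non-terminal) centre $v_j$, sink $u^* \in R$, and sources $u_1,\ldots,u_s \in R$; let $s_T = |\{t : u_t \in T\}|$ denote the number of sources inside $T$. An arc-by-arc computation of $\sum_i c^i_K$, paralleling the single-arc calculation already performed in the excerpt, gives $\sum_i c^i_K = s_T(z_T - t_j)$ when $u^* \in T$ (Case A) and $\sum_i c^i_K = t_j + s_T(z_T - t_j)$ when $u^* \notin T$ (Case B); combined with the fact that $\Delta^+_K(T) = 1$ iff $u^*\notin T$ and $s_T \ge 1$, the right-hand side of the displayed inequality evaluates to $s_T(z_T-t_j)$, $t_j$, or $(s_T-1)(z_T-t_j)$ according as $u^*\in T$, $u^*\notin T$ with $s_T=0$, or $u^*\notin T$ with $s_T\ge 1$. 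Several subcases then dispatch at once: whenever the right-hand side vanishes (namely $u^*\in T$ with $s_T=0$; $t_j = z_T$ in Case A or Case B with $s_T\ge 1$; or $u^*\notin T$ with $s_T=1$), the known $c_K \ge z(\Delta^+_K)$ suffices; and the remaining easy subcase $u^*\notin T$, $s_T=0$ reduces to $c_K \ge c_{v_ju^*} \ge \tau_{v_j} \ge t_j$ together with $z(\Delta^+_K)=0$ (the latter since laminarity of $\supp(z)$ forces every $U \in \supp(z)\setminus\{T\}$ to lie in $T$, and no source of $K$ then lies in any such $U$).

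The main obstacle will be the residual subcase $s_T \ge 1$, $t_j = \tau_{v_j} < z_T$, where $c_K \ge z(\Delta^+_K)$ alone is insufficient. The plan is to extract the needed additional slack from dual feasibility of $z$ at a family of auxiliary components. For each source $u_t \in T$, introduce the component $K_t \in \K$ with source $u_t$, centre $v_j$, and sink $w \in R\setminus T$ realising $c_{v_jw} = \tau_{v_j}$ (such $w$ exists because $\tau_{v_j}<\infty$ in this regime). Since $\Delta^+_{K_t}(T) = 1$ and $\Delta^+_{K_t}(U)$ equals $1$ if $u_t \in U$ and $0$ otherwise for each $U \subsetneq T$, feasibility of $z$ at $K_t$ collapses to
\[
  \zeta_t \;\le\; c_{u_tv_j} + \tau_{v_j} - z_T, \quad\text{where}\quad \zeta_t \;:=\; \sum_{U \in \supp(z),\; U \subsetneq T,\; u_t \in U} z_U.
\]
Applying the laminarity of $\supp(z)$ (Lemma~\ref{lem:dcr-dual-laminar}) once more to restrict attention to $U \subseteq T$ --- so that sources of $K$ lying outside $T$ cannot contribute to $z(\Delta^+_K) - z_T\Delta^+_K(T)$ --- together with a union bound over sources in $T$ yields $z(\Delta^+_K) \le z_T\Delta^+_K(T) + \sum_{u_t\in T}\zeta_t$. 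Substituting the bound on $\zeta_t$, expanding $c_K = c_{v_ju^*} + \sum_t c_{u_tv_j}$, and finally using $c_{v_ju^*} \ge \tau_{v_j} = t_j$ when $u^*\notin T$ (respectively $c_{v_ju^*}\ge 0$ when $u^*\in T$), a short algebraic simplification identifies the residual with exactly $\sum_i c^i_K - z_T\,\Delta^+_K(T)$, completing the case and hence the claim.
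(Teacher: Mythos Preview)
Your argument is correct and uses the same key idea as the paper: in the hard subcase $t_j=\tau_{v_j}<z_T$, both you and the paper invoke dual feasibility of $z$ at the auxiliary two-terminal components with a source $u_t\in T$, centre $v_j$, and sink at a minimiser $w\in R\setminus T$ of $c_{v_j\,\cdot}$, and then combine these with a union bound over sources in $T$. The difference is purely in packaging: the paper carries out a traditional case split (``$K$ crosses $T$'' vs.\ ``$K$ does not cross $T$'', then $t_j=z_T$ vs.\ $t_j=\tau_v$) and passes to the subcomponent $K'$ with sources $\sources(K)\cap T$, whereas you first reduce everything to the single scalar inequality $c_K - z(\Delta^+_K) \ge \sum_i c^i_K - z_T\,\Delta^+_K(T)$, compute the right-hand side explicitly in each configuration, and then treat the remaining cases uniformly. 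Your formulation is a bit more economical and makes the role of the slack $c_{v_ju^*}\ge t_j$ (when $u^*\notin T$) versus $c_{v_ju^*}\ge 0$ (when $u^*\in T$) transparent; the paper's version keeps closer to the combinatorial picture. Substantively the two proofs coincide.
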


\begin{proof}
First, note that $z'\ge0$ follows from its definition. Now take a component $K$ 
with non-terminal $v\in V\setminus R$ and sink $u\in R$. Let $1\le j\le\ell$ be 
such that $v_j=v$.

Suppose first that $K$ does not cross $T$.
If $\sources(K)\cap T=\emptyset$,
then $\Delta^+_K(U)=0$ whenever $U\subseteq R$ and $z_U>0$. Therefore, since by 
Claim~\ref{claim: c' non negative} we have $c'\ge0$, we conclude that
$$
  \sum_{U\subseteq R}\Delta^+_K(U)z'_U = 0 \le c'(K).
$$

On the other hand, suppose $\sources(K)\cap T\not=\emptyset$ but $u\in T$.
If $t_j=z_T$, we have
$$
  \sum_{i=1}^{\ell+1} c^i(K)
    = \sum_{i=j+1}^{\ell+1}(t_i-t_{i-1})|\sources(K)\cap T| = 0.
$$
Now, since $K$ does not cross $T$, the above implies that 
$$
  c'(K) = c(K) \ge \sum_{U\subseteq R}\Delta^+_K(U)z_U
    = \sum_{U\subseteq R}\Delta_K(U)z'_U.
$$

If $t_j=\tau_v$, let $u'\in R\setminus T$ be such that $\tau_v=c_{vu'}$
and let $W=\sources(K)\cap T$. By summing the inequalities of \ddcr{f} 
corresponding
to components of the form $wvu'$ for each $w\in W$, we have
\begin{eqnarray*}
  \sum_{w\in W}(c_{wv} + c_{vu'})
    & \ge & \sum_{w\in W}(\sum_{U\subseteq R}\delta^+_{wu'}(U)z_U) \\
    & = & \sum_{U\subseteq R}(\sum_{w\in W}\delta^+_{wu'}(U))z_U \\
    & \ge & \sum_{U\subseteq R}(\sum_{w\in W}\delta^+_{wu}(U))z_U + |W|z_T \\
    & \ge & \sum_{U\subseteq R}\Delta^+_K(U)z_U + |W|z_T.
\end{eqnarray*}
Therefore
\begin{eqnarray*}
  c'(K)
    & \ge & \sum_{w\in W} c'_{wv} 
     =  \sum_{w\in W} (c_{wv} - z_T + t_j) 
     =  \sum_{w\in W} c_{wv} - |W|(z_T - t_j) \\
    & \ge & \sum_{U\subseteq R}\Delta^+_K(U)z_U - |W|c_{vu'} + |W|z_T
            - |W|(z_T - t_j) 
     =  \sum_{U\subseteq R}\Delta^+_K(U)z_U,
\end{eqnarray*}
as required.

Now suppose $K$ crosses $T$.
Let $W=\sources(K)\cap T$,
and let $K'$ be the sub-component of $K$ having sources $W$.
We have
$$  c'(K)
     \ge  c'(K') 
     =  \sum_{w\in W}c'_{wv} + c'_{vu} 
     =  \sum_{w\in W}(c_{wv} - z_T + t_j) + c_{vu} - t_j.
$$
When $t_j=z_T$, we recall that $K$ crosses $T$, and that $T$ is outermost, hence
$$
  c'(K)
     \ge  c(K') - z_T 
     \ge  \sum_{U\subseteq R}\Delta^+_{K'}(U)z_U - z_T 
     =  \sum_{U\subseteq R} \Delta^+_K(U)z'_U.
$$
In the final case where $t_j=\tau_v$,
let $u'\in R\setminus T$ achieve the maximum in the definition of $\tau_v$,
and let $W=\sources(K)\cap T$.
Then, again $c'$ is non-negative because
\begin{eqnarray*}
  c'(K)
    & \ge & \sum_{w\in W}(c_{wv} - z_T + c_{vu'}) + c_{vu} - c_{vu'} 
     \ge  \sum_{w\in W}(c_{wv} + c_{vu'}) - |W|z_T \\
    & \ge & \sum_{w\in W}(\sum_{U\subseteq R}\delta^+_{wu'}(U)z_U) - |W|z_T 
     =  \sum_{U\subseteq R}(\sum_{w\in W}\delta^+_{wu'}(U))z'_U \\
     &\ge&  \sum_{U\subseteq R}\Delta^+_{K'}(U)z'_U 
     =  \sum_{U\subseteq R}\Delta^+_K(U)z'_U. 
\end{eqnarray*}
\end{proof}

\paragraph{(The case $k\geq 2$): } Recall that by $N=V\setminus R$ we denote the 
set of non-terminals. As in the previous case, we define non-negative cost function 
$c'$ along with $z'$ of smaller support so as to use the inductive hypothesis. 
For each $1\le i\le k$, we define now $c^i\in\R_+^A$ as follows. 

For each $uv\in A$ with $u\in R$ and $v\in N$, we set
$$        c^i_{uv} := 
          \begin{cases}
            c_{uv} & \text{ if $u\in T_i$} \\
            0 & \text{ otherwise},
          \end{cases}
$$
while for each $vu\in A$ with $v\in N$ and $u\in R$ we define
$$
      c^i_{vu} := \max
        \begin{Bmatrix}
          \sum(z_U : U\subseteq T_i, U\cap W\neq\emptyset, u\notin U)
                 - \sum(c_{wv} : w\in W) \\
        : W\subseteq T_i\text{ such that }
                 wv\in A\;\forall w\in W
        \end{Bmatrix}.
$$

Our first claim is that $c^i\ge 0$, for every $1\le i\le k$. The reason is that if 
$uv\in A$ with $u\in R$ and $v\in N$, we clearly have $c^i_{uv}\ge 0$. If on the 
other hand  $vu\in A$ with $v\in N$ and $u\in R$, just take $W=\emptyset$ in the 
definition of $c^i_{vu}$ to get $c^i_{vu}\ge0$.

Our next claim is that $c$ dominates the sum of $c^i$'s, namely
\begin{equation}\label{equa: c dominates sum of ci's}
c^1 + \cdots + c^k \le c.
\end{equation}

To see why, let $uv\in A$ with $u\in R$ and $v\in N$. Since $T_i$'s are disjoint, 
we have that $c^1_{uv} + \cdots + c^k_{uv} \le c_{uv}$. Next take $vu\in A$ with 
$v\in N$ and $u\in R$. For each $1\le i\le k$, let $W_i\subseteq T_i$ achieve the 
maximum in the definition of $c^i_{vu}$. Consider the component $K$ with sources 
$W:=W_1\cup\cdots\cup W_k$,
non-terminal $v$, and sink $u$. Since $z$ is feasible to \ddcr{f} with costs $c$,
we have
\begin{eqnarray*}
  c(K)
     &\ge&  \sum_{U:~U\subseteq R, \Delta^+_K(U)=1} z_U 
     =  \sum_{i=1}^k \sum_{U:~U\subseteq T_i, \Delta^+_K(U) = 1} z_U  
     =  \sum_{i=1}^k \sum_{U:~ U\subseteq T_i, U\cap W_i\neq\emptyset, u\notin U}
                                         z_U  \\
    & = & \sum_{i=1}^k (c^i_{vu} + \sum_{w\in W_i} c_{wv}) 
     =  \sum_{i=1}^k c^i_{vu} + \sum_{w\in W} c_{wv} 
     =  \sum_{i=1}^k c^i_{vu} + c(K) - c_{vu}.
\end{eqnarray*}
The latter implies that $$  \sum_{i=1}^k c^i_{vu} \le c_{vu}$$ as we claimed. 

Next, similarly to the case $k=1$, we define $z^i\in\R_+^{2^R}$ for each $1\le i\le 
k$ as follows. For $U\subseteq R$ we set
$$
  z^i_U :=
    \begin{cases}
      z_U & \text{ if $U\subseteq T_i$} \\
      0 & \text{ otherwise}.
    \end{cases}
$$
From our definition, it is immediate that $  \sum_{i=1}^k z^i = z$. Analogously to 
Claim~\ref{claim: z' feasible for dcr with c'}, we show again that 

\begin{claim}\label{claim: z' feasible for dcr with c' in case k>=2}
For $1\le i\le k$, vector $z^i$ is feasible to \ddcr{f} with costs $c^i$
\end{claim}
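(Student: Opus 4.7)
The plan is to verify the two conditions defining dual feasibility in \ddcr{f}: non-negativity $z^i\ge 0$, which is immediate from the case-split in the definition of $z^i$; and the per-component inequality
\begin{equation*}
  z^i(\Delta^+_K) \;\leq\; c^i(K) \qquad \forall K \in \K.
\end{equation*}
Fix such a $K$ with centre $v\in N$, sink $u\in R$, and source set $S:=\sources(K)\subseteq R$, and set $W:=S\cap T_i$. The first step will be to rewrite the left-hand side explicitly in terms of $W$.

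Since $z^i_U$ vanishes unless $U\subseteq T_i$, only such $U$ can contribute to $z^i(\Delta^+_K)$. For $U\subseteq T_i$, the equivalence $\Delta^+_K(U)=1 \Leftrightarrow u\notin U \text{ and } U\cap S\neq\emptyset$, combined with the identity $U\cap S=U\cap W$ (which holds because $U\subseteq T_i$), yields
\begin{equation*}
  z^i(\Delta^+_K) \;=\; \sum_{U\subseteq T_i,\; U\cap W\neq\emptyset,\; u\notin U} z_U .
\end{equation*}
If $W=\emptyset$ this sum is $0$ and the desired inequality follows from the already-established fact that $c^i\geq 0$.

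In the remaining case $W\neq\emptyset$, the main step is to substitute $W$ itself as the witness $W'$ in the max that defines $c^i_{vu}$. This is a legitimate choice because $W\subseteq T_i$ and every $w\in W$ is a source of $K$, hence $wv\in A$. The substitution gives
\begin{equation*}
  c^i_{vu} \;\geq\; \Bigl(\sum_{U\subseteq T_i,\; U\cap W\neq\emptyset,\; u\notin U} z_U\Bigr) - \sum_{w\in W} c_{wv} \;=\; z^i(\Delta^+_K) - \sum_{w\in W} c_{wv}.
\end{equation*}
Combined with the identity $\sum_{w\in S} c^i_{wv}=\sum_{w\in W} c_{wv}$, which is an immediate consequence of the definition of $c^i_{wv}$ (contributions from sources outside $T_i$ are zeroed out), this rearranges to $c^i(K) = c^i_{vu}+\sum_{w\in S} c^i_{wv} \geq z^i(\Delta^+_K)$, as required.

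The only real subtlety is recognising that the formula for $c^i_{vu}$ has been engineered so that plugging in precisely $W'=S\cap T_i$ produces the required lower bound; unlike the $k=1$ case, no uncrossing or delicate case analysis is needed, because the disjointness of $T_1,\dots,T_k$ together with the confinement $U\subseteq T_i$ forces the sources of $K$ that are visible to $z^i$ to lie entirely within $T_i$. Once this alignment between the defining sum of $c^i_{vu}$ and the sum expressing $z^i(\Delta^+_K)$ is observed, the argument reduces to the one-line substitution above.
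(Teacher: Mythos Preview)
Your proof is correct and follows essentially the same route as the paper: you substitute $W=\sources(K)\cap T_i$ into the defining maximum for $c^i_{vu}$, combine with the identity $\sum_{w\in S}c^i_{wv}=\sum_{w\in W}c_{wv}$, and conclude $c^i(K)\ge z^i(\Delta^+_K)$. The only cosmetic difference is that you separate out the $W=\emptyset$ case explicitly, whereas the paper handles it uniformly via the same substitution (which then yields $0\ge 0$).
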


\begin{proof}
Indeed, consider a component $K$ with non-terminal $v\in N$ and sink $u\in R$. Then, by 
setting $W=\sources(K)\cap T_i$ in the definition of $c^i_{vu}$, we obtain
\begin{eqnarray*}
  c^i(K)
    & = & c^i_{vu} + \sum_{w\in\sources(K)}c^i_{wv} 
     =  c^i_{vu} + \sum_{w\in W} c_{wv} \\
     &\ge&  \sum_{U:~ U\subseteq T_i, U\cap W\neq\emptyset, u\notin U}z_U  
     =  \sum_{U:~ U\subseteq R, U\cap\sources(K)\neq\emptyset,
                       u\notin U} z^i_U   \\
     &=&  \sum_{U\subseteq R} \Delta^+_K(U) z_U.
\end{eqnarray*}
Since $z^i\ge0$, this shows $z^i$ is feasible to \ddcr{f} with costs $c^i$.
\end{proof}

Since we are dealing with the case $k\ge2$, we have $|\supp(z^i)|<|\supp(z)|$, for 
all $ 1\le i\le k$, so by induction, $ b^Tz^i \le (c^i)^Tx$ is valid for \bcrf\
for each $1\le i\le k$.

By summing over all $i$'s, we get that
$$
  f^Tz = f^T(z^1+\cdots+z^k) \le (c^1+\cdots+c^k)^Tx \le c^Tx
$$
is also valid for \bcrf\, which completes the inductive proof. Altogether, this 
justifies~\eqref{equa: non-trivial ineq dual BCR DCR}.

\end{document}